\def\eqref#1{(\ref{#1})}
\newtheorem{theorem}[assumption]{Theorem}
\newtheorem{prop}[assumption]{Proposition}
\newtheorem{lem}[assumption]{Lemma}
\newtheorem{cor}[assumption]{Corollary}
\begin{document}
\begin{frontmatter}

\title{Nonasymptotic bounds on the estimation error
of MCMC algorithms}
\runtitle{Nonasymptotic estimation error of MCMC}

\begin{aug}
\author[1]{\fnms{Krzysztof} \snm{\L atuszy\'{n}ski}\corref{}\thanksref{1}\ead[label=e1]{latuch@gmail.com}\ead[label=u1,url]{http://www.warwick.ac.uk/go/klatuszynski}},
\author[2]{\fnms{B{\l}a\.zej} \snm{Miasojedow}\thanksref{2}\ead[label=e2]{bmia@mimuw.edu.pl}}
\and\\
\author[3]{\fnms{Wojciech} \snm{Niemiro}\thanksref{3}\ead[label=e3]{wniem@mat.uni.torun.pl}}
\runauthor{K. \L atuszy\'{n}ski, B. Miasojedow and W. Niemiro} 
\address[1]{Department of Statistics, University of Warwick, CV4 7AL,
Coventry, UK.\\ \printead{e1}; \printead{u1}}
\address[2]{Institute of Applied Mathematics and Mechanics, University
of Warsaw, Banacha 2, 02-097 Warszawa, Poland. \printead{e2}}
\address[3]{Faculty of Mathematics and Computer Science, Nicolaus
Copernicus University, Chopina 12/18, 87-100 Toru\'{n}, Poland.
\printead{e3}}
\end{aug}

\received{\smonth{8} \syear{2011}}
\revised{\smonth{1} \syear{2012}}

%
\begin{abstract}
We address the problem of upper bounding the mean square error of MCMC
estimators. Our analysis is nonasymptotic.
We first establish a general result valid for essentially all ergodic
Markov chains encountered in Bayesian
computation and a possibly unbounded target function~$f$. The bound is
sharp in the sense that the leading term is exactly $\sigma_{\mathrm
{as}}^{2}(P,f)/n$,
where $\sigma_{\mathrm{as}}^{2}(P,f)$ is the CLT asymptotic variance.
Next, we proceed to
specific additional assumptions and give explicit computable bounds for
geometrically and polynomially ergodic Markov chains under quantitative
drift conditions. As a corollary, we provide results on confidence estimation.
\end{abstract}

%
\begin{keyword}
\kwd{asymptotic variance}
\kwd{computable bounds}
\kwd{confidence estimation}
\kwd{drift conditions}
\kwd{geometric ergodicity}
\kwd{mean square error}
\kwd{polynomial ergodicity}
\kwd{regeneration}
\end{keyword}

\end{frontmatter}
%

\section{Introduction}\label{sec1}

Let $\pi$ be a probability distribution on a Polish space $\mathcal
{X}$ and
$f\dvtx\mathcal{X}\to\mathbb{R}$ be
a Borel function. The objective is to compute (estimate) the quantity
\[
\theta:=\pi(f)=\int_\mathcal{X}\pi(
\mathrm{d}x)f(x).
\]
%
Typically $\mathcal{X}$ is a high dimensional space, $f$ need not be bounded
and the density of $\pi$ is known up to a normalizing constant. Such
problems arise in Bayesian inference and are often solved
using Markov chain Monte Carlo (MCMC) methods. The idea is to simulate
a Markov chain $(X_{n})$ with transition kernel $P$ such that $\pi
P=\pi$,
that is $\pi$ is stationary with respect to $P$. Then averages
along the trajectory of the chain,
\[
\label{eq: est} \hat\theta_{n}:= \frac{1}{n}\sum
_{i=0}^{n-1}f(X_{i})
\]
are used to estimate $\theta$. It is essential to have explicit and
reliable bounds which provide information about how long
the algorithms must be run to achieve a prescribed level of accuracy
(cf. \cite{rosenthal1995minorization,jones2001honest,jones2006fixed}).
The aim of our paper is to derive nonasymptotic and explicit
bounds on the mean square error,
%
\begin{equation}
\label{eq:rmse} \mathrm{MSE}:= \mathbb{E}(\hat\theta_{n}-\theta
)^{2}.
\end{equation}

To upper bound (\ref{eq:rmse}), we begin with a general inequality
valid for all ergodic Markov chains that admit a one step small set
condition. Our bound is sharp in the sense that the leading term is
exactly $\sigma_{\mathrm{as}}^{2}(P,f)/n$,
where $\sigma_{\mathrm{as}}^{2}(P,f)$ is the asymptotic variance in
the central limit
theorem. The proof relies on the regeneration technique, methods of
renewal theory and statistical sequential analysis.

To obtain explicit bounds, we subsequently consider geometrically and
polynomially ergodic Markov chains. We assume appropriate drift
conditions that give quantitative information about the transition
kernel $P.$ The upper bounds on MSE are then stated in terms of the
drift parameters.

We note that most MCMC algorithms implemented in Bayesian inference are
geometrically or polynomially ergodic (however establishing the
quantitative drift conditions we utilize may be prohibitively difficult
for complicated models). Uniform ergodicity is stronger then
geometrical ergodicity considered here and is often discussed in
literature. However, few MCMC algorithms used in practice are uniformly
ergodic. MSE and confidence estimation for uniformly ergodic chains are
discussed in our accompanying paper~\cite{latuszynski2011nonasymptotic}.

The Subgeometric condition, considered in, for example, \cite
{douc2008bounds}, is more general than polynomial ergodicity considered
here. We note that with some additional effort, the results for
polynomially ergodic chains (Section \ref{Sec:Poly}) can be
reformulated for subgeometric Markov chains. Motivated by applications,
we avoid these technical difficulties.

Upper bounding the mean square error (\ref{eq:rmse}) leads immediately
to confidence estimation by
applying the Chebyshev inequality. One can also apply the more
sophisticated median trick of \cite{jerrum1986random}, further
developed in \cite{niemiro2009fixed}
. The median trick leads to an exponential inequality for the MCMC
estimate whenever the MSE can be upper bounded, in particular in the
setting of geometrically and polynomially ergodic chains. 

We illustrate our results
with benchmark examples. The first, which is related to a simplified
hierarchical Bayesian model and similar to \cite{jones2001honest}, Example 2, allows to compare the bounds provided in our paper
with actual MCMC errors. Next, we demonstrate how to apply our results
in the Poisson--Gamma model of \cite{gelfand1990sampling}. 
Finally, the contracting normals toy-example allows for a numerical
comparison with our earlier work \cite{eps_alpha}.

The paper is organised as follows: in Section~\ref{Sec:Regen} we give
background on the regeneration technique and introduce notation. The
general MSE upper bound is derived in Section~\ref{Sec:MainTh}.
Geometrically and polynomially ergodic Markov chains are considered in
Sections~\ref{Sec:Geom} and~\ref{Sec:Poly}, respectively. The
applicability of our results is discussed in Section~\ref{Sec:Ex},
where also numerical examples are presented. 
Technical proofs are deferred to Sections~\ref{Sec:Lemmas} and~\ref
{Sec:Proofs}.


\subsection{Related nonasymptotic results}

A vast literature on nonasymptotic analysis of Markov chains is
available in various settings. To place our results in this context, we
give a brief account.

In the case of \textit{finite state space}, an approach based on the
spectral decomposition was used in \cite{aldous1987markov,gillman1998chernoff,leon2004optimal,niemiro2009fixed} to derive
results of related type.

For \textit{bounded} functionals of \textit{uniformly} ergodic chains on
a general state space, exponential inequalities with explicit constants
such as those in \cite{glynn2002hoeffding,kontoyiannis2005relative}
can be applied to derive confidence bounds. In the accompanying paper
\cite{latuszynski2011nonasymptotic}, we compare the simulation cost of
confidence estimation based on our approach (MSE bounds with the median
trick) to exponential inequalities and conclude that while exponential
inequalities have sharper constants, our approach gives in this setting
the optimal dependence on the regeneration rate $\beta$ and therefore
will turn out more efficient in many practical examples.

Related results come also from studying concentration of measure
phenomenon for dependent random variables. For the large body of work
in this area see, for example, \cite{marton1996measure,samson2000concentration} and \cite{kontorovich2008concentration} (and
references therein), where transportation inequalities or martingale
approach have been used. These results, motivated in a more general
setting, are valid for Lipschitz functions with respect to the Hamming
metric. They also include expressions $\sup_{x,y \in\mathcal{X}} \|
P^i(x, \cdot) - P^i(y, \cdot)\|_{\mathrm{tv}}$ and when applied to our
setting, they are well suited for \textit{bounded} functionals of \textit
{uniformly} ergodic Markov chains, but cannot be applied to
geometrically ergodic chains. For details, we refer to the original
papers and the discussion in Section 3.5 of \cite{Adamczak2008}.

For lazy reversible Markov chains, nonasymptotic mean square error
bounds have been obtained for \textit{bounded} target functions in \cite
{rudolf2009explicit} in a setting where explicit bounds on conductance
are available. These results have been applied to approximating
integrals over balls in $\mathbb{R}^d$ under some regularity
conditions for
the stationary measure, see \cite{rudolf2009explicit} for details. The
Markov chains considered there are in fact uniformly ergodic, however
in their setting the regeneration rate $\beta,$ can be verified for
$P^h,$ $h > 1$ rather then for $P$ and turns out to be exponentially
small in dimension. Hence, conductance seems to be the natural approach
to make the problem tractable in high dimensions.

Tail inequalities for \textit{bounded} functionals of Markov chains that
are not uniformly ergodic were considered in \cite
{clemenccon2001moment,Adamczak2008} and \cite{douc2008bounds}
using regeneration techniques. These results apply for example, to
\textit{geometrically} or \textit{subgeometrically} ergodic Markov
chains, however they also involve nonexplicit constants or require
tractability of moment conditions of random tours between
regenerations. Computing explicit bounds from these results may be
possible with additional work, but we do not pursue it here.


Nonasymptotic analysis of \textit{unbounded} functionals of
Markov chains is scarce. In particular, tail inequalities for
\textit{unbounded} target function $f$ that can be applied to
\textit{geometrically} ergodic Markov chains have been established by
Bertail and Cl\'emen\c con in \cite{bertail2009sharp} by regenerative
approach and using truncation arguments. However, they involve
nonexplicit constants and can not be directly applied to confidence
estimation. Nonasymptotic and explicit MSE bounds for geometrically
ergodic MCMC samplers have been obtained in \cite{eps_alpha} under a
geometric drift condition by exploiting computable convergence rates.
Our present paper improves these results in a fundamental way.
Firstly, the generic Theorem~\ref{th:BasicMSE} allows to extend the
approach to different classes of Markov chains, for example,
polynomially ergodic in Section~\ref{Sec:Poly}. Secondly, rather then
resting on computable convergence rates, the present approach relies
on upper-bounding the CLT asymptotic variance which, somewhat
surprisingly, appears to be more accurate and consequently the MSE
bound is much sharper, as demonstrated by numerical examples in
Section \ref{Sec:Ex}.

Recent work \cite{joulin2010curvature} address error estimates for
MCMC algorithms under positive curvature condition. The positive
curvature implies geometric ergodicity in the Wasserstein distance and
bivariate drift conditions (cf. \cite{roberts2001small}). Their
approach appears to be applicable in different settings to ours and
also rests on different notions, for example, employs the coarse
diffusion constant instead of the exact asymptotic variance. Moreover,
the target function $f$ is assumed to be Lipschitz which is problematic
in Bayesian inference. Therefore, our results and \cite
{joulin2010curvature} appear to be complementary.

Nonasymptotic rates of convergence of geometrically, polynomially and
subgeometrically ergodic Markov chains to their stationary
distributions have been investigated in many papers \cite
{Meyn1994computable,rosenthal1995rates,roberts1999bounds,rosenthal2002quantitative,jones2004sufficient,fort2003computable,douc2004quantitative,baxendale2005renewal,fort2003polynomial,douc2007computable,robertsquantitative} under assumptions similar to
our Section \ref{Sec:Geom} and \ref{Sec:Poly}, together with an
aperiodicity condition that is not needed for our purposes. Such
results, although of utmost theoretical importance, do not directly
translate into bounds on accuracy of estimation, as they allow to
control only the bias of estimates and the so-called burn-in time.
\section{Regeneration construction and notation}\label{Sec:Regen}

Assume $P$ has invariant distribution $\pi$ on $\mathcal{X}$, is $\pi
$-irreducible and Harris recurrent. The following one step small set
Assumption \ref{as:SmallSetCond} is verifiable for virtually all
Markov chains targeting Bayesian posterior distributions. It allows for
the regeneration/split construction of Nummelin \cite
{nummelin1978splitting} and Athreya
and Ney \cite{athreya1978new}.

\begin{assumption}[(Small set)]\label{as:SmallSetCond}
There exist a Borel set $J\subseteq\mathcal{X}$ of positive $\pi$
measure, a
number $\beta>0$ and
a probability measure $\nu$ such that
\[
P(x,\cdot)\geq\beta\mathbb{I}(x\in J)\nu(\cdot).
\]
\end{assumption}

Under Assumption \ref{as:SmallSetCond}, we can define a bivariate
{Markov chain} $(X_n,\Gamma_n)$
on the space $\mathcal{X}\times\{0,1\}$ in the following way. Bell variable
$\Gamma_{n-1}$ depends only on $X_{n-1}$ via
%
\begin{equation}
\label{eq:gamma} \mathbb{P}(\Gamma_{n-1}=1|X_{n-1}=x)=\beta
\mathbb{I}(x\in J).
\end{equation}
The rule of transition from $(X_{n-1},\Gamma_{n-1})$ to $X_{n}$ is
given by
\begin{eqnarray*}
\mathbb{P}(X_{n}\in A|
\Gamma_{n-1}=1,X_{n-1}=x)&=&\nu(A),
\\
\mathbb{P}(X_{n}\in A|\Gamma_{n-1}=0,X_{n-1}=x)&=&Q(x,A),
\end{eqnarray*}
where $Q$ is the normalized ``residual'' kernel given by
\[
Q(x,\cdot):=\frac{P(x,\cdot)- \beta\mathbb{I}(x\in J)\nu(\cdot
)}{1-\beta\mathbb{I}(x\in J)}.
\]
Whenever $\Gamma_{n-1}=1$, the chain regenerates at moment $n$.
The regeneration epochs are
\begin{eqnarray*}
  T&:=&T_1:=\min\{n\geq1\dvt
\Gamma_{n-1}=1\},
\\
T_k&:=&\min\{n\geq T_{k-1}: \Gamma_{n-1}=1\}.
\end{eqnarray*}
Write $\tau_{k}:=T_{k}-T_{k-1}$ for $k=2,3,\ldots$ and $\tau_1:=T$.
Random blocks
\begin{eqnarray*}
  \Xi&:=&\Xi_1 :=(X_{0},
\ldots,X_{T-1},T),
\\
\Xi_k &:=&(X_{T_{k-1}},\ldots,X_{T_{k}-1},
\tau_{k})
\end{eqnarray*}
for $k=1,2,3,\ldots$ are independent.

We note that numbering of the bell variables $\Gamma_n$ may differ
between authors: in our notation $\Gamma_{n-1} = 1$ indicates
regeneration at moment $n$, not $n-1$. Let symbols $\mathbb{P}_{\xi}$ and
$\mathbb{E}_{\xi}$ mean that $X_{0}\sim\xi$. Note also that these symbols
are unambiguous, because specifying the distribution of $X_0$ is
equivalent to specifying the joint distribution of $(X_0, \Gamma_0)$
via~(\ref{eq:gamma}).

For $k=2,3,\ldots,$ every block
$\Xi_{k}$ under $\mathbb{P}_{\xi}$ has the same distribution as $\Xi$
under $\mathbb{P}_{\nu}$. However, the distribution of
$\Xi$ under $\mathbb{P}_{\xi}$ is in general different. We will also use
the following notations
for the block sums:
\[
  \Xi(f) :=\sum_{i=0}^{T-1}
f(X_i), \qquad \Xi_{k}(f) :=\sum
_{i=T_{k-1}}^{T_{k}-1} f(X_i).
\]

\section{A general inequality for the MSE}\label{Sec:MainTh}

We assume that $X_0\sim\xi$ and thus $X_n\sim\xi P^n$. Write ${\bar f}
:=f-\pi(f)$.

\begin{theorem}\label{th:BasicMSE}
If Assumption \ref{as:SmallSetCond} holds, then
%
\begin{equation}
\label{eq:BasicMSE_bound} \sqrt{\mathbb{E}_{\xi} (\hat
\theta_{n}-\theta)^2}\leq\frac{\sigma_{\mathrm{as}}(P,f)}{\sqrt
{n}} \biggl( 1+
\frac{C_{0}(P)}{n} \biggr) +\frac{C_{1}(P,f)}{n}+\frac{C_{2}(P,f)}{n},
\end{equation}
where
%
\begin{eqnarray}
\label{eqdef:asvar} \sigma_{\mathrm{as}}^{2}(P,f)& := &
\frac{\mathbb{E}_{\nu} (\Xi({\bar
f}))^{2}}{\mathbb{E}_{\nu}T} ,
\\
\label{eqdef:czero} C_{0}(P)&:= & \mathbb{E}_{\pi} T-
\frac{1}{2},
\\
\label{eqdef:cone} C_{1}(P,f)& := & \sqrt{\mathbb{E}_{\xi} \bigl(
\Xi\bigl(|{\bar f}|\bigr)\bigr)^{2}},
\\
\label{eqdef:ctwo} C_{2}(P,f)&=&C_2(P,f,n) :=  \sqrt
{\mathbb{E}_{\xi} \Biggl(\mathbb{I}(T_1 < n)\sum
_{i=n}^{T_{R(n)}-1}|{\bar f}|(X_i)
\Biggr)^{2}},
\\
\label{eqdef:Rn} R(n) & := & \min\{r\geq1\dvt T_{r}> n\}.
\end{eqnarray}
\end{theorem}

\begin{rem}
The bound in Theorem \ref{th:BasicMSE} is meaningful only if $\sigma
_{\mathrm{as}}^{2}(P,f)
<\infty,$ $C_{0}(P)<\infty$, $C_{1}(P,f)< \infty$ and
$C_{2}(P,f)<\infty$.
Under Assumption \ref{as:SmallSetCond}, we always have
$\mathbb{E}_{\nu} T<\infty$ but not necessarily $\mathbb{E}_{\nu}
T^{2}<\infty$.
On the other hand, finiteness of $\mathbb{E}_{\nu} (\Xi({\bar
f}))^{2}$ is a
sufficient and necessary
condition for the CLT to hold for Markov chain $X_{n}$ and function
$f$. This fact is proved in \cite{ECP2008-9}
in a more general setting. For our purposes, it is important to note
that $\sigma_{\mathrm{as}}^{2}(P,f)$ in Theorem \ref{th:BasicMSE} is
indeed the \textit
{asymptotic variance} which
appears in the CLT, that is
\[
\sqrt{n} (\hat\theta_{n}-\theta)\to_{d}\mathrm{N}
\bigl(0,\sigma_{\mathrm{as}}^{2}(P,f)\bigr).
\]
Moreover,
\[
{\lim_{n\to\infty}}n\mathbb{E}_{\xi} (\hat\theta_{n}-\theta
)^2 =\sigma_{\mathrm{as}}^{2}(P,f).
\]
In this sense, the leading term ${\sigma_{\mathrm{as}}(P,f)}/{\sqrt
{n}}$ in Theorem
\ref{th:BasicMSE} is
``asymptotically correct'' and cannot be improved.
\end{rem}

\begin{rem}
Under additional assumptions of geometric and polynomial
ergodicity, in Sections~\ref{Sec:Geom} and~\ref{Sec:Poly} respectively, we will derive bounds
for $\sigma_{\mathrm{as}}^{2}(P,f)$ and $C_{0}(P),$ $C_{1}(P,f),$
$C_{2}(P,f)$ in terms of some
explicitly computable quantities.
\end{rem}

\begin{rem}
In our related work \cite{latuszynski2011nonasymptotic}, we discuss a
special case of the setting considered here, namely when regeneration
times $T_k$ are identifiable. These leads to $X_0 \sim\nu$ and an
regenerative estimator
of the form
%
\begin{eqnarray}
\label{eq:regen_est} \hat{\theta}_{T_{R(n)}} & :=& \frac{1}{
T_{R(n)} } \sum_{i=1}^{R(n)}
\Xi_i(f) = \frac{1}{ T_{R(n)} } \sum
_{i=0}^{T_{R(n)}-1}f(X_i).
\end{eqnarray}
The estimator $\hat{\theta}_{T_{R(n)}}$ is somewhat easier to
analyze. We refer to \cite{latuszynski2011nonasymptotic} for details.
\end{rem}

\begin{pf*}{Proof of Theorem \ref{th:BasicMSE}}
Recall $R(n)$ defined in \eqref{eqdef:Rn} and let
\[
\label{eq: R} \Delta(n) := T_{R(n)}-n.
\]
In words: $R(n)$ is \textit{the first moment of regeneration past $n$}
and $\Delta(n)$ is the \textit{overshoot}
or excess over $n$. Let us express the estimation error as follows.
\begin{eqnarray*}
  \hat\theta_{n}-\theta& =&
\frac{1}{n}\sum_{i=0}^{n-1}{\bar
f}(X_i) %
= \frac{1}{n} \Biggl(
\sum_{i=T_1}^{T_{R(n)}-1}{\bar f}(X_i) +
\sum_{i=0}^{T_1-1}{\bar f} (X_i)-
\sum_{i=n}^{T_{R(n)}-1}{\bar f}(X_i)
\Biggr)
\\
& =:& \frac{1}{n} (\mathcal{Z} +\mathcal{O}_{1} -\mathcal{O}_{2} ),
\end{eqnarray*}
with the convention that $\sum_l^u=0$ whenever $l>u$.
The triangle inequality entails
%
\begin{equation}
\label{eq:MSEinproof}
\sqrt{\mathbb{E}_{\xi} ( \hat
\theta_{n}-\theta)^{2
}} \leq\frac{1}{n}
\bigl( \sqrt{\mathbb{E}_{\xi}\mathcal{Z}^{2}}+\sqrt{
\mathbb{E}_{\xi}(\mathcal{O}_{1}-\mathcal{O}_{2})^{2}}
\bigr).
\end{equation}
Denote $C(P,f) := \sqrt{\mathbb{E}_{\xi}(\mathcal{O}_{1}-\mathcal{O}_{2})^{2}}$
and compute
%
%
\begin{eqnarray}\label{eq:Cf}
\nonumber
C(P,f)&=& \Biggl(\mathbb{E}_\xi\Biggl\{ \Biggl(\sum
_{i=0}^{T-1}{\bar f} (X_i)-\sum
_{i=n}^{T_{R(n)}-1}{\bar f}(X_i) \Biggr)\mathbb{I}(T
\geq n)
\\
\nonumber
& &\hspace*{27pt}{} + \Biggl(\sum_{i=0}^{T-1}
{\bar f}(X_i)-\sum_{i=n}^{T_{R(n)}-1}{
\bar f} (X_i) \Biggr)\mathbb{I}(T<n) \Biggr\}^2
\Biggr)^{1/2}
\\
&\leq& \Biggl(\mathbb{E}_\xi\Biggl(\sum
_{i=0}^{T-1}\big|{\bar f}(X_i)\big|+\sum
_{i=n}^{T_{R(n)}-1}\big|{\bar f}(X_i)\big|\mathbb{I}(T< n)
\Biggr)^2 \Biggr)^{1/
2}
\\
\nonumber
&\leq&\sqrt{\mathbb{E}_\xi\Biggl(\sum
_{i=0}^{T-1}\big|{\bar f}(X_i)\big|
\Biggr)^2}+\sqrt{\mathbb{E}_\xi\Biggl(\sum
_{i=n}^{T_{R(n)}-1}\big|{\bar f}(X_i)\big|\mathbb{I}(T< n)
\Biggr)^2}
\\
&= &C_{1}(P,f)+C_{2}(P,f).\nonumber
\end{eqnarray}
%
%
It remains to bound the middle term, $\mathbb{E}_{\xi}\mathcal{Z}^{2}$, which
clearly corresponds to the most
significant portion of the estimation error.
The crucial step in our proof is to show the following inequality:
%
\begin{equation}
\label{eq: Regenest} \mathbb{E}_{\nu} \Biggl( \sum
_{i=0}^{T_{R(n)}-1}{\bar f}(X_i)
\Biggr)^{2} \leq\sigma_{\mathrm{as}}^{2}(P,f)
\bigl(n+2C_{0}(P) \bigr).
\end{equation}
Once this is proved, it is easy to see that
\begin{eqnarray*}
\mathbb{E}_{\xi}\mathcal{Z}^{2}& =& \sum_{j=1}^{n}
\mathbb{E}_{\xi} \bigl( \mathcal{Z}^{2} |T_1=j \bigr)
\mathbb{P}_{\xi}(T_1=j)
= \sum
_{j=1}^{n} \mathbb{E}_{\nu} \Biggl( \sum
_{i=0}^{T_{R(n-j)}-1}{\bar f}(X_i)
\Biggr)^{2} \mathbb{P}_{\xi}(T_1=j)
\\
& \leq&\sum_{j=1}^{n}
\sigma_{\mathrm{as}}^{2}(P,f) \bigl(n-j+2C_{0}(P) \bigr)
\mathbb{P}_{\xi
}(T_1=j)
\leq
\sigma_{\mathrm{as}}^{2}(P,f) \bigl(n+2C_{0}(P) \bigr),
\end{eqnarray*}
consequently, $\sqrt{\mathbb{E}_{\xi}\mathcal{Z}^{2}}\leq\sqrt
{n}\sigma
_{\mathrm{as}}(P,f)
(1+C_{0}(P)/n)$ and the conclusion will follow by recalling \eqref
{eq:MSEinproof} and \eqref{eq:Cf}.

We are therefore left with the task of proving \eqref{eq: Regenest}.
This is essentially a statement about
sums of i.i.d. random variables. Indeed,
%
\begin{equation}
\label{eq: Blocks} \sum_{i=0}^{T_{R(n)}-1}{
\bar f}(X_i)=\sum_{k=1}^{R(n)}
\Xi_{k}({\bar f})
\end{equation}
and all the blocks $\Xi_k$ (including $\Xi=\Xi_1$) are i.i.d.  under
$\mathbb{P}_{\nu}$.
By the general version of the Kac theorem (\cite{meyn1993markov},
Theorem
10.0.1, or \cite{nummelin2002mc}, equation (3.3.7)), we have
\[
\mathbb{E}_{\nu} \Xi(f)= \pi(f) \mathbb{E}_{\nu} T
\]
(and $1/\mathbb{E}_{\nu} T=\beta\pi(J)$), so $\mathbb{E}_{\nu}
\Xi({\bar f})=0$
and $\operatorname{Var}_{\nu} \Xi({\bar f})=\sigma_{\mathrm
{as}}^{2}(P,f)\mathbb{E}
_{\nu} T$.
Now we will exploit the fact that $R(n)$ is a stopping time with
respect to
$\mathcal{G}_k=\sigma((\Xi_{1}({\bar f}),\tau_{1}),\ldots,(\Xi
_{k}({\bar
f}),\tau_{k}))$,
a~filtration generated by i.i.d. pairs.
We are in a position to apply the two Wald's identities. The second
identity yields
\[
\mathbb{E}_{\nu} \Biggl(\sum_{k=1}^{R(n)}
\Xi_k({\bar f}) \Biggr)^2=\operatorname{Var}_{\nu} \Xi({\bar
f}) \mathbb{E}_{\nu} R(n)= \sigma_{\mathrm{as}}^{2}(P,f)
\mathbb{E}_{\nu} T \mathbb{E}_{\nu} R(n).
\]
But in this expression, we can replace $\mathbb{E}_{\nu}T\mathbb
{E}_{\nu} R(n)$
by $\mathbb{E}_{\nu}T_{R(n)}$ because of the first
Wald's identity:
\[
\mathbb{E}_{\nu} T_{R(n)}= \mathbb{E}_{\nu} \sum
_{k=1}^{R(n)}\tau_k=\mathbb{E}_{\nu
}T
\mathbb{E}_{\nu} R(n).
\]
It follows that
%
\begin{equation}
\label{eq: Seqident} \mathbb{E}_{\nu} \Biggl(\sum
_{k=1}^{R(n)}\Xi_{k}({\bar f})
\Biggr)^{2} = \sigma_{\mathrm{as}}^{2}(P,f)
\mathbb{E}_{\nu}T_{R(n)}=\sigma_{\mathrm
{as}}^{2}(P,f)
\bigl(n+\mathbb{E}_{\nu} \Delta(n) \bigr).
\end{equation}

We now focus attention on bounding the ``mean overshoot'' $\mathbb
{E}_{\nu}
\Delta(n)$.
Under $\mathbb{P}_{\nu}$, the cumulative sums $T=T_1<T_2<\cdots
<T_k<\cdots
$ form a (nondelayed) renewal process
in discrete time.
%
Let us invoke the following elegant theorem of Lorden (\cite
{lorden1970excess}, Theorem 1):
%
\begin{equation}
\label{eq: Lord} \mathbb{E}_{\nu} \Delta(n) \leq\frac{ \mathbb
{E}_{\nu}
T^2}{\mathbb{E}_{\nu} T}.
\end{equation}
By Lemma~\ref{lem: SquareBlock} with $g \equiv1$ from Section~\ref
{Sec:Lemmas}, we obtain:
%
\begin{equation}
\label{eq: Lord2} \mathbb{E}_{\nu} \Delta(n) \leq2
\mathbb{E}_{\pi}T-1.
\end{equation}
%
%
%
Hence, substituting \eqref{eq: Lord2} into \eqref{eq: Seqident} and
taking into account
\eqref{eq: Blocks} we obtain \eqref{eq: Regenest} and complete the proof.
\end{pf*}

\section{Geometrically ergodic chains}\label{Sec:Geom}

In this section, we upper bound constants $ \sigma_{\mathrm
{as}}^{2}(P,f), C_{0}(P), C_{1}(P,f),
C_{2}(P,f),$ appearing in Theorem \ref{th:BasicMSE}, for geometrically
ergodic Markov chains under a quantitative drift assumption. Proofs are
deferred to Sections \ref{Sec:Lemmas} and \ref{Sec:Proofs}.

Using drift conditions is a standard approach for establishing
geometric ergodicity.
We refer to \cite{roberts2004general} or \cite{meyn1993markov} for
definitions and further details.
The assumption below is the same as in \cite{baxendale2005renewal}.
Specifically, let $J$ be the small set which appears in Assumption~\ref
{as:SmallSetCond}.

\begin{assumption}[(Geometric drift)]\label{as: GeoDriftCond}
There exist a function $V\dvtx \mathcal{X}\to[1,\infty[$, constants
$\lambda<1$ and
$K<\infty$ such that
\[
PV(x):=\int_{\mathcal{X}}P(x,\mathrm{d}y)V(y)\leq%
\cases{ \lambda V(x),& \quad for $x\notin J$,
\cr
K,&\quad  for $x\in J$.
}
\]
\end{assumption}

In many papers conditions similar to Assumption \ref{as: GeoDriftCond}
have been established for
realistic MCMC algorithms in statistical models of practical relevance
\cite
{hobert1998geometric,fort2000v,fort2003geometric,jones2004sufficient,johnson2010gibbs,roy2010monte}.
This opens the possibility of computing nonasymptotic upper bounds on
MSE or nonasymptotic confidence intervals in these models.

In this section, we bound quantities appearing in Theorem
\ref{th:BasicMSE} by expressions involving $\lambda$, $\beta$ and $K$.
The main result in this section is the following theorem.

\begin{theorem}\label{th: DriftBounds}
If Assumptions \ref{as:SmallSetCond} and \ref{as: GeoDriftCond} hold
and $f$ is such that
\[
\Vert{\bar f}\Vert_{V^{1/2}}:= \sup_{x}\big|{\bar
f}(x)\big|/V^{1/
2}(x)<\infty,
\]
then
\begin{eqnarray*}
&&\phantom{ii}\mathrm{(i)}\quad  C_{0}(P) \leq\frac{\lambda}{1-\lambda} \pi(V) +
\frac{K-\lambda-\beta}{\beta(1-\lambda)}+\frac{1}{2},
\\
&&\phantom{i}\mathrm{(ii)}\quad  \frac{ \sigma_{\mathrm{as}}^{2}(P,f)}{\Vert{\bar f}
\Vert_{V^{1/
2}}^{2}} \leq\frac{1
+\lambda^{1/2}}{1-\lambda^{1/2}}\pi(V)+
\frac{2(K^{1/2}-\lambda^{1/2}-\beta)}{\beta(1-\lambda^{1/2})}\pi\bigl
(V^{1/2}\bigr),
\\
&&\hspace*{1pt}\mathrm{(iii)}\quad \frac{C_{1}(P,f)^2}{\Vert{\bar f}
\Vert_{V^{1/2}}^{2} } \leq\frac{1}{(1-\lambda^{1/2})^{2}}\xi( V) +
\frac{2(K^{1/2}-\lambda^{1/2}-\beta)}{\beta(1-\lambda
^{1/2})^{2}} \xi\bigl(V^{1/2}\bigr)
\\
&&\hspace*{82pt}{} + \frac{\beta(K-\lambda-\beta)+2(K^{1/2}-\lambda^{1/2}-\beta)^{2}} {
\beta^{2}(1-\lambda^{1/2})^{2}}, 
\\
&&\hspace*{2pt}\mathrm{(iv)}\quad
\begin{tabular}{@{}p{300pt}@{}}
$C_{2}(P,f)^2$ satisfies an inequality
analogous to \textup{(iii)} with $\xi$ replaced by $\xi P^n$.
\end{tabular}
\end{eqnarray*}
\end{theorem}

\begin{rem}
Combining Theorem \ref{th: DriftBounds} with Theorem \ref
{th:BasicMSE} yields the MSE bound of interest. Note that the leading
term is of order $n^{-1}\beta^{-1}(1-\lambda)^{-1}.$ A related result
is Proposition 2 of \cite{fort2003convergence}
where the $p$th moment of $\hat{\theta}_n$ for $p \geq2$ is
controlled under similar assumptions.
Specialised to $p=2$, the leading term of the moment bound of \cite
{fort2003convergence} is of order $n^{-1}\beta^{-3}(1-\lambda)^{-4}.$
\end{rem}

\begin{rem}
An alternative form of the first bound in Theorem \ref{th:
DriftBounds} is
\[
\mathrm{(i')}\quad  C_{0}(P) \leq
\frac{\lambda^{1/2} }{1-\lambda^{1/2}} \pi\bigl(V^{1/2}\bigr) +\frac{K^{1/2}-\lambda
^{1/2}-\beta}{\beta(1-\lambda^{1/2})}+
\frac{1}{2}.
\]

Theorem \ref{th: DriftBounds} still involves some quantities which can
be difficult to compute, such as
$\pi(V^{1/2})$ and $\pi(V)$, not to mention $\xi P^n(V^{1/2})$ and $\xi P^n(V)$. The following proposition
gives some simple complementary bounds.
\end{rem}

\begin{prop}\label{pr: Compl} Under Assumptions \ref{as:SmallSetCond}
and \ref{as: GeoDriftCond},
\begin{eqnarray*}
&&\phantom{ii}\mathrm{(i)}\quad \pi\bigl(V^{1/2}\bigr) \leq\pi(J) \frac{K^{1/2}
-\lambda^{1/2}}{1-\lambda^{1/2}}
\leq\frac{K^{1/2} -
\lambda^{1/2}}{1-\lambda^{1/2}},
\\
&&\phantom{i}\mathrm{(ii)}\quad \pi(V) \leq\pi(J) \frac{K-\lambda}{1-\lambda} \leq\frac
{K-\lambda}{1-\lambda},
\\
&&\hspace*{1pt}\mathrm{(iii)}\quad \mbox{if } \xi\bigl(V^{1/2}\bigr) \leq\frac{K^{1/2}}{1-\lambda^{1/2}}
\qquad \mbox{then }  \xi P^n\bigl(V^{1/2}\bigr) \leq
\frac{K^{1/2}}{1-\lambda^{1/2}},
\\
&&\hspace*{2pt}\mathrm{(iv)}\quad \mbox{if }\xi(V) \leq\frac{K}{1-\lambda}  \qquad \mbox{then }  \xi
P^n(V) \leq\frac{K}{1-\lambda},
\\
&&\hspace*{1pt}\phantom{i}\mathrm{(v)}\quad \Vert{\bar f}\Vert_{V^{1/2}} \mbox{ can be related to }
\Vert f
\Vert_{V^{1/2}} \mbox{ by}
\\
&&\hspace*{2pt}\phantom{\mathrm{(iii)}\quad  } \Vert{\bar f}\Vert_{V^{1/2}} \leq\Vert f \Vert_{V^{1/2}}
\biggl[1 + \frac{\pi(J)(K^{1/2}-\lambda^{1/2})}{(1-\lambda^{1/2})
\inf_{x \in\mathcal{X}}V^{1/2}(x)} \biggr] \leq\Vert f \Vert
_{V^{1/2}}
\biggl[1 + \frac{K^{1/2}-\lambda^{1/2}}{1-\lambda^{1/2}} \biggr].
\end{eqnarray*}
\end{prop}

\begin{rem}
In MCMC practice, almost always the initial state is deterministically
chosen, \mbox{$\xi=\delta_{x}$} for some
$x\in\mathcal{X}$. In this case in (ii) and (iii), we just have
to choose
$x$ such that
$V^{1/2}(x)\leq K^{1/2}/(1-\lambda^{1/2})$ and
$V(x)\leq K/(1-\lambda)$, respectively (note that the latter inequality
implies the former). It might be interesting to note that our bounds
would not be improved if we added
a burn-in time $t>0$ at the beginning of simulation. The standard
practice in MCMC computations is to discard
the initial part of trajectory and use the estimator
\[
\hat\theta_{t,n} := \frac{1}{n}\sum
_{i=t}^{n+t-1}f(X_{i}).
\]
Heuristic justification is that the closer $\xi P^t$ is to the
equilibrium distribution $\pi$, the better.
However, for technical reasons, our upper bounds on error are the
tightest if the initial point has the smallest value of $V$,
and \textit{not} if its distribution is close to $\pi$.
\end{rem}

\begin{rem}
In many specific examples, one can obtain (with some additional effort)
sharper inequalities
than those in Proposition \ref{pr: Compl} or at least bound $\pi(J)$
away from~1.
However, in general we assume that such bounds are not available.
\end{rem}

\section{Polynomially ergodic Markov chains} \label{Sec:Poly}

In this section, we upper bound constants $ \sigma_{\mathrm
{as}}^{2}(P,f), C_{0}(P), C_{1}(P,f),
C_{2}(P,f),$ appearing in Theorem \ref{th:BasicMSE}, for polynomially
ergodic Markov chains under a quantitative drift assumption. Proofs are
deferred to Sections \ref{Sec:Lemmas} and \ref{Sec:Proofs}.

The following drift condition is a counterpart of Drift in Assumption
\ref{as: GeoDriftCond}, and is used to establish polynomial ergodicity
of Markov chains \cite{jarner2002polynomial,douc2004practical,douc2008bounds,meyn1993markov}.

\begin{assumption}[(Polynomial drift)]\label{as:DriftPoly}
There exist a function $V\dvtx \mathcal{X}\to[1,\infty[$, constants
$\lambda<1,\allowbreak
\alpha\leq1 $ and
$K<\infty$ such that
\[
PV(x)\leq%
\cases{ V(x)-(1-\lambda)V(x)^\alpha,&\quad
for $x\notin J$,
\cr
K,& \quad for $x\in J$.
}
\]
\end{assumption}

We note that Assumption \ref{as:DriftPoly} or closely related drift
conditions have been established for MCMC samplers in specific models
used in Bayesian inference, including independence samplers,
random-walk Metropolis algorithms, Langevin algorithms and Gibbs
samplers, see, for example, \cite{fort2000v,jarner2003necessary,jarner2007convergence}.

In this section, we bound quantities appearing in Theorem
\ref{th:BasicMSE} by expressions involving $\lambda$, $\beta$,
$\alpha$ and $K$.
The main result in this section is the following theorem.

\begin{theorem}\label{th: PolyDriftBounds}
If Assumptions \ref{as:SmallSetCond} and \ref{as:DriftPoly} hold with
$\alpha>\frac{2}{3}$ and $f$ is such that
$\Vert{\bar f}\Vert_{V^{{(3/2)}\alpha-1}}:= \sup_{x}|{\bar f}
(x)|/V^{{(3/2)}\alpha-1}(x)<\infty$, then
\begin{eqnarray*}
&&\hspace*{-5pt}\phantom{ii}\mathrm{(i)}\quad  C_{0}(P) \leq\frac{1}{\alpha(1-\lambda)} \pi
\bigl(V^\alpha\bigr) +\frac{K^\alpha-1 - \beta}{\beta\alpha
(1-\lambda)}+\frac{1}{\beta}-
\frac{1}{2},
\\
&&\hspace*{-5pt}\phantom{i}\mathrm{(ii)}\quad  \frac{ \sigma_{\mathrm{as}}^{2}(P,f)} {\Vert{\bar
f}\Vert_{V^{{(3/2)}\alpha-1}}^{2}} \leq\pi\bigl(V^{3\alpha
-2}\bigr)+
\frac{4 \pi(V^{2\alpha
-1})}{\alpha(1-\lambda)}
+2 \biggl(
\frac{2K^{{\alpha}/{2}}-2 - 2\beta}{\alpha\beta
(1-\lambda)}+\frac{1}{\beta}-1 \biggr)\pi\bigl(V^{{(3/2)}\alpha-1}
\bigr),%
\\
&&\hspace*{-5pt}\hspace*{1pt}\mathrm{(iii)}\quad  \frac{C_{1}(P,f)^2}{\Vert{\bar f}
\Vert_{V^{{(3/2)}\alpha-1}}^{2}} \leq\frac{1}{(2\alpha
-1)(1-\lambda)}\xi
\bigl(V^{2\alpha-1}\bigr)+ \frac{4}{\alpha^2(1-\lambda)^2}\xi
\bigl(V^{\alpha}\bigr)
\\
&&\hspace*{-5pt}\hspace*{91pt}{} + \biggl(\frac{8K^{\alpha/2}-8 - 8 \beta}{\alpha^2\beta
(1-\lambda)^2}+\frac{4-4\beta}{\alpha\beta(1-\lambda
)} \biggr)\xi
\bigl(V^{\alpha/2}\bigr)\\
&&\hspace*{-5pt}\hspace*{91pt}{} +\frac{\alpha(1-\lambda
)+4}{\alpha\beta(1-\lambda)} + \frac{K^{2\alpha-1}-1 - \beta
}{(2\alpha-1)\beta(1-\lambda)}
\\
&&\hspace*{-5pt}\hspace*{91pt}{}+\frac{4(K^{\alpha}-1 -\beta)}{\alpha^2\beta(1-\lambda)^2}+2
\biggl(\frac{2K^{{\alpha}/{2}}-2 - 2\beta}{\alpha\beta
(1-\lambda
)}+\frac{1}{
\beta} \biggr)^2\\
&&\hspace*{-5pt}\hspace*{91pt}{} -2 \biggl(\frac{2K^{{\alpha}/{2}}-2
- 2 \beta}{\alpha\beta(1-\lambda)}+\frac{1}{\beta} \biggr),
\\
&&\hspace*{-5pt}\hspace*{2pt}\mathrm{(iv)}\quad \frac {C_{2}(P,f)^2}{\Vert{\bar f}
\Vert_{V^{{(3/2)}\alpha-1}}^{2}} \leq\frac{1}{(2\alpha-1)\beta
^{(2\alpha-1)/\alpha
}(1-\lambda)} \biggl(\frac{K-
\lambda}{1-\lambda} \biggr)^{(4\alpha
-2)/\alpha} 
+ \frac{4(K-\lambda)^2}{\alpha^2\beta(1-\lambda)^4}
\\
&&\hspace*{-5pt}\hspace*{91pt}{} + \biggl(\frac{8K^{\alpha/2}-8 - 8 \beta}{\alpha^2\beta
(1-\lambda)^2}+\frac{4-4\beta}{\alpha\beta(1-\lambda
)} \biggr)\frac{K-\lambda
}{\sqrt{\beta}(1-\lambda)}\\
&&\hspace*{-5pt}\hspace*{91pt}{} +\frac{\alpha
(1-\lambda)+4}{\alpha\beta(1-\lambda)} + \frac{K^{2\alpha-1}-1 -
\beta}{(2\alpha-1)\beta(1-\lambda)}
\\
&&\hspace*{-5pt}\hspace*{91pt}{}+\frac{4(K^{\alpha}-1 -\beta)}{\alpha^2\beta(1-\lambda)^2}+2
\biggl(\frac{2K^{{\alpha}/{2}}-2 - 2\beta}{\alpha\beta
(1-\lambda
)}+\frac{1}{
\beta} \biggr)^2 \\
&&\hspace*{-5pt}\hspace*{91pt}{}-2 \biggl(\frac{2K^{{\alpha}/{2}}-2
- 2 \beta}{\alpha\beta(1-\lambda)}+\frac{1}{\beta} \biggr).
\end{eqnarray*}
\end{theorem}

\begin{rem}
A counterpart of Theorem~\ref{th: PolyDriftBounds} parts
(i)--(iii) for $\frac{1}{2}< \alpha\leq\frac{2}{3}$ and functions s.t.
$\Vert f \Vert_{V^{\alpha-{1}/{2}}}<\infty$ can be also
established, using respectively modified but analogous calculations as
in the proof of the above. For part (iv) however, an additional
assumption $\pi(V) < \infty$ is necessary.
\end{rem}

Theorem \ref{th: PolyDriftBounds} still involves some quantities
depending on $\pi$ which can be difficult to compute, such as
$\pi(V^\eta)$ for $\eta\leq\alpha$. The following proposition
gives some simple complementary bounds.

\begin{prop}\label{pr: PolyCompl} Under Assumptions \ref
{as:SmallSetCond} and \ref{as:DriftPoly},
\begin{enumerate}[(ii)]
\item[(i)] For $\eta\leq\alpha$ we have
\[
\pi\bigl(V^{\eta}\bigr) \leq\biggl(\frac{K - \lambda
}{1-\lambda
}
\biggr)^{\eta/\alpha}.
\]
\item[(ii)] If $\eta\leq\alpha$, then
$\Vert{\bar f}\Vert_{V^\eta} $ can be related to $ \Vert f \Vert
_{V^\eta} $ by
\[
\Vert{\bar f}\Vert_{V^\eta} \leq\Vert f
\Vert_{V^\eta} \biggl[1 + \biggl(\frac{K - \lambda}{1-\lambda}
\biggr)^{\eta/\alpha}
\biggr].
\]
\end{enumerate}
\end{prop}


\section{Applicability in Bayesian inference and examples} \label{Sec:Ex}

To apply current results for computing MSE of estimates arising in
Bayesian inference, one needs drift and small set conditions with
explicit constants. The quality of these constants will affect the
tightness of the overall MSE bound. In this section, we present three
numerical examples. In Section~\ref{subsec:hm}, a simplified
hierarchical model similar as \cite{jones2001honest}, Example 2, is
designed to compare the bounds with actual values and asses their
quality. Next, in Section~\ref{subsec:pg}, we upperbound the MSE in
the extensively discussed in literature Poisson--Gamma hierarchical
model. Finally, in Section~\ref{subsec:cn}, we present the
contracting normals toy-example to demonstrate numerical improvements
over~\cite{eps_alpha}.

In realistic statistical models, the explicit drift conditions required
for our analysis are very difficult to establish. Nevertheless, they
have been recently obtained for a wide range of complex models of
practical interest. Particular examples include: Gibbs sampling for
hierarchical random effects models in \cite{jones2004sufficient}; van
Dyk and Meng's algorithm for multivariate Student's $t$ model \cite{marchev2004geometric}; Gibbs sampling for a family of Bayesian
hierarchical general linear models in \cite{johnson2007gibbs} (cf.
also~\cite{johnson2010gibbs}); block Gibbs sampling for Bayesian
random effects models with improper priors \cite{tan2009block}; Data
Augmentation algorithm for Bayesian multivariate regression models with
Student's $t$ regression errors \cite{roy2010monte}. Moreover, a large
body of related work has been devoted to establishing a drift condition
together with a small set to enable regenerative simulation for classes
of statistical models. This kind of results, pursued in a number of
papers mainly by James~P.~Hobert, Galin~L.~Jones and their coauthors,
cannot be used directly for our purposes, but may provide substantial
help in establishing quantitative drift and regeneration required here.

In settings where existence of drift conditions can be established, but
explicit constants can not be computed (cf., e.g., \cite
{fort2003geometric,papaspiliopoulos2008stability}), our results do not
apply and one must validate MCMC by asymptotic arguments. This is not
surprising since qualitative existence results are not well suited for
deriving quantitative finite sample conclusions.\vspace*{-3pt}

\subsection{A simplified hierarchical model} \label{subsec:hm}
The simulation experiments described below are designed to compare
the bounds proved in this paper with actual errors of MCMC estimation.
We use a simple example similar as \cite{jones2001honest}, Example~2.
Assume that $y=(y_{1},\ldots,y_{t})$ is an i.i.d. sample from the
normal distribution $\mathrm{N}(\mu,\kappa^{-1})$,
where $\kappa$ denotes the reciprocal of the variance. Thus, we have
\[
p(y|\mu,\kappa) = p(y_{1},\ldots,y_{t}|
\mu,\kappa) \propto{\kappa^{t/2}}\exp\Biggl[-
\frac{\kappa}{2} \sum_{j=1}^t(y_{j}-
\mu)^{2} \Biggr].
\]
The pair $(\mu,\kappa)$ plays the role of an unknown parameter. To
make things simple, let us use the Jeffrey's noninformative (improper) prior
$p(\mu,\kappa)=p(\mu)p(\kappa)\propto\kappa^{-1}$ (in \cite
{jones2001honest} a different prior is considered).
The posterior density is
\[
  p(\mu,\kappa|y) \propto p(y|\mu,
\kappa)p(\mu,\kappa) \propto\kappa^{t/2-1}\exp\biggl[-
\frac{\kappa t}{2} \bigl(s^{2}+(\bar y-\mu)^{2} \bigr)
\biggr],
\]
where
\[
\bar y=\frac{1}{t}\sum_{j=1}^{t}y_{j},\qquad
s^2=\frac{1}{t}\sum_{j=1}^{t}(y_{j}-
\bar y)^{2}.
\]
Note that $\bar y$ and $s^{2}$ only determine the location and scale of
the posterior.
We will be using a Gibbs sampler, whose performance does not depend on
scale and location,
therefore without loss of generality we can assume that $\bar y=0$ and
$s^{2}=t$.
Since $y=(y_{1},\ldots,y_{t})$ is kept fixed, let us slightly abuse notation
by using symbols $p(\kappa|\mu)$, $p(\mu|\kappa)$ and $p(\mu)$ for
$p(\kappa|\mu,y)$, $p(\mu|\kappa,y)$ and $p(\mu|y)$, respectively.
The Gibbs sampler alternates between drawing samples from both conditionals.
Start with some $(\mu_{0},\kappa_{0})$. Then, for $i=1,2,\ldots,$
\begin{itemize}
\item$\kappa_{i}\sim\operatorname{Gamma}({t}/{2},({t}/{2})(s^{2}+\mu
_{i-1}^{2}) )$,
\item$\mu_{i}\sim\mathrm{N}(0,1/(\kappa_{i}t) )$.
\end{itemize}
If we are chiefly interested in $\mu$, then it is convenient to
consider the two small steps
$\mu_{i-1}\to\kappa_{i}\to\mu_{i}$ together. The transition
density is
\begin{eqnarray*}
  p(\mu_{i}|\mu_{i-1})&=&
\int p(\mu_{i}|\kappa) p(\kappa|\mu_{i-1})\, \mathrm{d}\kappa
\\
&\propto&\int_{0}^{\infty} \kappa^{1/2}\exp
\biggl[-\frac{\kappa t}{2}\mu_{i}^2 \biggr]
\bigl(s^{2}+\mu_{i-1}^{2} \bigr)^{t/2}
\kappa^{t/2-1} \exp\biggl[-\frac{\kappa t}{2} \bigl(s^{2}+
\mu_{i-1}^{2} \bigr) \biggr] \,\mathrm{d}\kappa
\\
&=& \bigl(s^{2}+\mu_{i-1}^{2}
\bigr)^{t/2} \int_{0}^{\infty}
\kappa^{(t-1)/2} \exp\biggl[-\frac{\kappa t}{2} \bigl(s^{2}+
\mu_{i-1}^{2}+\mu_{i}^{2} \bigr)
\biggr] \,\mathrm{d}\kappa
\\
&\propto&\bigl(s^{2}+\mu_{i-1}^{2}
\bigr)^{t/2} \bigl(s^{2}+\mu_{i-1}^{2}+
\mu_{i}^{2} \bigr)^{-(t+1)/2}.
\end{eqnarray*}
The proportionality constants concealed behind the $\propto$ sign
depend only on $t$.
Finally, we fix scale letting $s^{2}=t$ and get
%
\begin{equation}
\label{eq: trans} p(\mu_{i}|\mu_{i-1})\propto
\biggl(1+\frac{\mu_{i-1}^{2}}{t} \biggr)^{t/2} \biggl(1+\frac{\mu
_{i-1}^{2}}{t}+
\frac{\mu_{i}^{2}}{t} \biggr)^{-(t+1)/2}.
\end{equation}
If we consider the RHS of \eqref{eq: trans} as a function of $\mu
_{i}$ only, we can regard the first factor as constant and write
\[
p(\mu_{i}|\mu_{i-1}) \propto\biggl(1+ \biggl(1+
\frac{\mu_{i-1}^{2}}{t} \biggr)^{-1}\frac
{\mu_{i}^{2}}{t} \biggr)^{-(t+1)/2}.
\]
It is clear that the conditional distribution of random variable
%
\begin{equation}
\label{eq: t} \mu_{i} \biggl(1+\frac{\mu_{i-1}^{2}}{t}
\biggr)^{-1/2}
\end{equation}
is t-Student distribution with $t$ degrees of freedom.
Therefore, since the t-distribution has the second moment equal to
$t/(t-2)$ for $t>2$, we infer that
\[
\mathbb{E}\bigl(\mu_{i}^{2}|\mu_{i-1}\bigr) =
\frac{t+\mu_{i-1}^{2}}{t-2}.
\]
Similar computation shows that the posterior marginal density of $\mu$
satisfies
\[
p(\mu)\propto\biggl(1+\frac{t-1}{t}\frac{\mu^{2}}{t-1}
\biggr)^{-t/2}.
\]
Thus, the stationary distribution of our Gibbs sampler is rescaled
t-Student with $t-1$ degrees of freedom. Consequently, we have
\[
\mathbb{E}_{\pi} \mu^2=\frac{t}{t-3}.
\]

\begin{prop}[(Drift)]\label{pr: Driftex}
Assume that $t\geq4$. Let $ V(\mu):=\mu^2+1 $
and $J=[-a,a]$. The transition kernel of the (2-step) Gibbs sampler
satisfies
\[
PV (\mu) \leq%
\cases{ \lambda V(\mu), &  \quad for $ |
\mu|>a$;
\cr
K, & \quad  for $|\mu|\leq a$,
}\qquad
\mbox{provided that }a>\sqrt{t/(t-3)}.
\]
The quantities $\lambda,$ $K$ and $\pi(V)$
are given by
\[
  \lambda=\frac{1}{t-2} \biggl(
\frac{2t-3}{1+a^{2}}+1 \biggr),\qquad  K=2+\frac{a^{2}+2}{t-2}\quad  \mbox{and}\quad
\pi(V) =\frac{2t-3}{t-3}.
\]
%
\end{prop}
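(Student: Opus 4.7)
The plan is to reduce everything to the conditional moment formula
$\Ex(\mu_i^2\mid \mu_{i-1})=(t+\mu_{i-1}^2)/(t-2)$ already established in the text via the t-Student representation \eqref{eq: t}. From this identity we immediately get, for the two-step kernel $P$ and $V(\mu)=\mu^2+1$,
\begin{equation*}
   PV(\mu)\;=\;\Ex(\mu_i^2+1\mid \mu_{i-1}=\mu)\;=\;\frac{\mu^2+2t-2}{t-2}.
\end{equation*}
So the whole proposition becomes an exercise in manipulating this explicit rational function.

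First I would handle the off-set case $|\mu|>a$. Writing $u=\mu^2+1$, I would note
\begin{equation*}
  \frac{PV(\mu)}{V(\mu)}=\frac{u+2t-3}{(t-2)\,u}=\frac{1}{t-2}\Bigl(1+\frac{2t-3}{u}\Bigr),
\end{equation*}
which is strictly decreasing in $u$. Hence the supremum over $|\mu|>a$ is attained as $u\downarrow a^2+1$, which gives exactly $\lambda=(t-2)^{-1}\bigl((2t-3)/(1+a^2)+1\bigr)$. The small-set case $|\mu|\leq a$ is even simpler: $PV(\mu)$ is increasing in $\mu^2$, so $\sup_{|\mu|\leq a}PV(\mu)=(a^2+2t-2)/(t-2)$, which rearranges to the stated $K=2+(a^2+2)/(t-2)$.

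Next I would verify the admissibility condition $\lambda<1$: the chain of equivalences $\lambda<1\iff (2t-3)/(a^2+1)<t-3\iff a^2>t/(t-3)$ yields the threshold $a>\sqrt{t/(t-3)}$; note that $t\ge 4$ is exactly what makes $t-3>0$ so that the inequality $\lambda<1$ is solvable. Finally, $\pi(V)$ follows from the fact, stated just above the proposition, that under $\pi$ the variable $\mu$ has the (rescaled) t-Student law with $t-1$ degrees of freedom satisfying $\Ex_\pi\mu^2=t/(t-3)$; adding $1$ gives $\pi(V)=(2t-3)/(t-3)$.

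There is essentially no ``hard part'' here: the entire argument is monotonicity of two explicit rational functions in $\mu^2$ plus the already-computed moment. The only points requiring a line of care are (a) checking that the critical level where the ratio $PV/V$ crosses $1$ is precisely $\mu^2=t/(t-3)$, which matches the threshold for $a$, and (b) keeping the algebra straight when rewriting $(a^2+2t-2)/(t-2)$ as $2+(a^2+2)/(t-2)$.
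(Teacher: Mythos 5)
Your proof is correct and follows essentially the same route as the paper: both rest on the explicit identity $PV(\mu)=\Ex(\mu_i^2+1\mid\mu_{i-1}=\mu)=(t+\mu^2)/(t-2)+1$ and then reduce to elementary algebra in $\mu^2$. The only cosmetic difference is that you obtain $\lambda$ as the supremum of the decreasing ratio $PV/V$ over $|\mu|>a$, whereas the paper verifies the stated $\lambda$ directly by factoring $\lambda V(\mu)-PV(\mu)=\frac{2t-3}{(t-2)(1+a^2)}(\mu^2-a^2)$; the remaining computations of $K$, the threshold $a>\sqrt{t/(t-3)}$, and $\pi(V)$ coincide.
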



\begin{pf} Since $a>\sqrt{t/t-3}$, we obtain that $\lambda=\frac
{1}{t-2} (\frac{2t-3}{1+a^2}+1 )<\frac{1}{t-2}(t-2)=1$.
Using the fact that
\[
PV(\mu)=\mathbb{E}\bigl(\mu_{i}^{2}+1|
\mu_{i-1}=\mu\bigr)= \frac{t+\mu^{2}}{t-2}+1
\]
we obtain
\begin{eqnarray*}
\lambda V(\mu)-PV(\mu)&=&\frac{1}{t-2} \biggl(\frac
{2t-3}{1+a^2}+1 \biggr)
\bigl(\mu^2+1\bigr)-\frac{t+\mu^2}{t-2}-1
\\
&=&\frac{1}{t-2} \biggl(\frac{2t-3}{1+a^2}\mu^2 +
\frac
{2t-3}{1+a^2}-2t+3 \biggr)
\\
&=&\frac{2t-3}{(t-2)(1+a^2)} \bigl(\mu^2+1-1-a^2 \bigr)
\\
&=&\frac{2t-3}{(t-2)(1+a^2)}\bigl(\mu^2-a^2\bigr).
\end{eqnarray*}
Hence, $\lambda V(\mu)-PV(\mu)>0$ for $|\mu|>a$. For $\mu$ such
that $|\mu|\leq a$, we get that
\[
PV(\mu)=\frac{t+\mu^{2}}{t-2}+1\leq\frac{t+a^2}{t-2}+1=2+\frac
{t+a^2-t+2}{t-2}=2+
\frac{a^2+2}{t-2}.
\]
Finally,
\[
\pi(V)= \mathbb{E}_{\pi} \mu^2+1 =\frac{t}{t-3}+1=
\frac{2t-3}{t-3}.
\]
\upqed\end{pf}

\begin{prop}[(Minorization)]\label{pr: Minex}
Let $p_{\min}$ be a subprobability density given by
\[
p_{\min}(\mu)= %
\cases{ p(\mu|a), & \quad  for $
|\mu|\leq h(a)$;
\cr
p(\mu|0), & \quad  for $|\mu|> h(a)$,
}
\]
where $p(\cdot|\cdot)$ is the transition density given by \eqref{eq:
trans} and
\[
h(a)= \biggl\{a^{2} \biggl[ \biggl(1+\frac{a^{2}}{t}
\biggr)^{t/(t+1)}-1 \biggr]^{-1}-t \biggr\}^{1/2}.
\]
Then $|\mu_{i-1}|\leq a$ implies $p(\mu_{i}|\mu_{i-1})\geq p_{\min}
(\mu_{i})$. Consequently,
if we take for $\nu$ the probability measure with the normalized
density $p_{\min}/\beta$
then the small set Assumption \ref{as:SmallSetCond} holds for \mbox{$J=[-a,a]$}.
Constant $\beta$ is given by
\[
\beta=1-\mathbb{P}\bigl(|\vartheta|\leq h(a) \bigr) +\mathbb
{P}\biggl
(|\vartheta|\leq
\biggl(1+\frac{a^{2}}{t} \biggr)^{-1/2} h(a) \biggr),
\]
%
%
\begin{figure}

\includegraphics{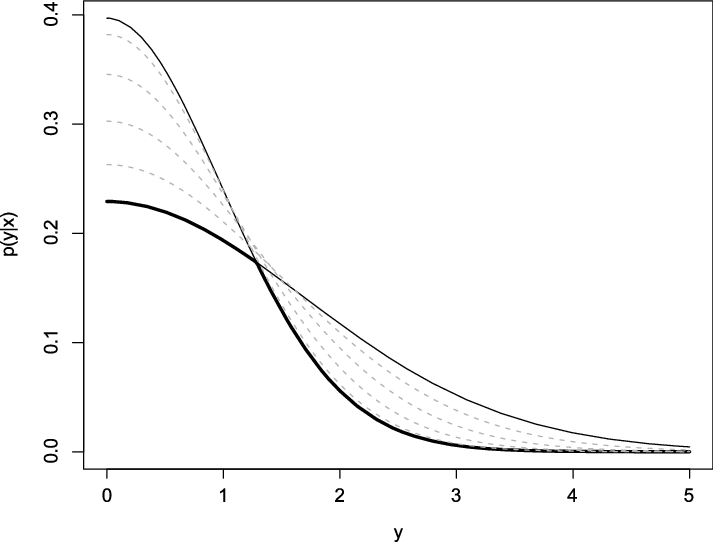}

\caption{Illustration of Proposition \protect\ref{pr: Minex}, with $t=50$ and $a=10$.
Solid lines are graphs of
$p(\mu_i|0)$ and $p(\mu_i|a)$. Bold line is the graph of $p_{\min}(\mu_i)$. Gray dotted lines are graphs of
$p(\mu_i|\mu_{i-1})$ for some selected positive $\mu_{i-1}\leq a$.}\label{fig1}
\end{figure}
where $\vartheta$ is a random variable with t-Student distribution
with $t$ degrees of freedom.
\end{prop}


Proposition \ref{pr: Minex} is illustrated in Figure \ref{fig1}.

\begin{pf*}{Proof of Proposition \ref{pr: Minex}}
The formula for $p_{\min}$ results from minimisation of $p(\mu
_{i}|\mu_{i-1})$ with respect
to $\mu_{i-1}\in[-a,a]$. We use \eqref{eq: trans}.
First, compute $(\partial/\partial\mu_{i-1}) p(\mu_{i}|\mu_{i-1})$
to check that for every $\mu_{i}$ the function
$\mu_{i-1}\mapsto p(\mu_{i}|\mu_{i-1})$ has to attain minimum either
at 0 or at $a$.
Indeed,
\begin{eqnarray*}
  \frac{\partial}{\partial\mu_{i-1}}p(\mu_{i}|
\mu_{i-1}) & =& \mathit{const}\cdot\biggl[\frac{t}{2}
\bigl(s^2+\mu_{i-1}^2\bigr)^{t/2-1}
\bigl(s^2+\mu_{i-1}^2+\mu_{i}^2
\bigr)^{-(t+1)/2}\cdot2\mu_{i-1}
\\
&&\hspace*{32pt}{} -\frac{t+1}{2}\bigl(s^2+
\mu_{i-1}^2\bigr)^{t/2}\bigl(s^2+
\mu_{i-1}^2+\mu_{i}^2
\bigr)^{-(t+1)/2-1}\cdot2\mu_{i-1} \biggr]
\\
& =& \mu_{i-1}\bigl(s^2+\mu_{i-1}^2
\bigr)^{t/2-1}\bigl(s^2+\mu_{i-1}^2+
\mu_{i}^2\bigr)^{-(t+1)/2-1}
\\
&&{} \cdot\bigl[t\bigl(s^2+\mu_{i-1}^2+
\mu_{i}^2\bigr)-(t+1) \bigl(s^2+
\mu_{i-1}^2+\mu_{i}^2\bigr) \bigr].
\end{eqnarray*}
Assuming that $\mu_{i-1}>0$, the first factor at the right-hand side
of the above equation is positive, so $(\partial/\partial\mu
_{i-1})p(\mu_{i}|\mu_{i-1})>0$ iff
$t(s^2+\mu_{i-1}^2+\mu_{i}^2)-(t+1)(s^2+\mu_{i-1}^2+\mu_{i}^2)>0$,
that is iff
\[
\mu_{i-1}^2 < t\mu_{i}^2-s^2.
\]
Consequently, if $t\mu_{i}^2-s^2\leq0$ then the function $\mu
_{i-1}\mapsto p(\mu_{i}|\mu_{i-1})$ is decreasing for $\mu_{i-1}>0$
and $\min_{0\leq\mu_{i-1}\leq a}p(\mu_{i}|\mu_{i-1})=p(\mu_{i},a)$.
If $t\mu_{i}^2-s^2> 0$, then this function first increases and then
decreases. In either case we have $\min_{0\leq\mu_{i-1}\leq a}p(\mu
_{i}|\mu_{i-1})=
\min[p(\mu_{i}|a),p(\mu_{i}|0)]$. Thus using symmetry, $p(\mu
_{i}|\mu_{i-1})=p(\mu_{i}|-\mu_{i-1})$, we obtain
\[
p_{\min}(\mu_i)=\min_{|\mu_{i-1}|\leq a}p(
\mu_{i}|\mu_{i-1})= %
\cases{ p(
\mu_i|a), & \quad  if $p(\mu_i|a) \leq p(
\mu_i|0) $;
\cr
p(\mu_i|0), &\quad  if $p(\mu_i|a) > p(
\mu_i|0)$.
}
\]
Now it is enough to solve the inequality, say, $p(\mu|0)< p(\mu|a)$,
with respect to
$\mu$. The following elementary computation shows that this inequality
is fulfilled iff $|\mu|> h(a)$:
\begin{eqnarray*}
  p(\mu|0) &=& \frac{(s^2)^{t/2}}{(s^2+\mu^2)^{(t+1)/2}}
< \frac{(s^2+a^2)^{t/2}}{(s^2+a^2+\mu^2)^{(t+1)/2}} = p(\mu|a),
\qquad \mbox{iff }
\\
\biggl(\frac{s^2+a^2+\mu^2}{s^2+\mu^2} \biggr)^{(t+1)/2}& <& \biggl(
\frac{s^2+a^2}{s^2} \biggr)^{t/2},\qquad  \mbox{iff }
\\
\biggl(1+\frac{a^2}{s^2+\mu^2} \biggr)^{t+1}& <& \biggl(1+
\frac{a^2}{s^2} \biggr)^{t},\qquad  \mbox{iff }
\\
\frac{a^2}{s^2+\mu^2}& < &\biggl(1+\frac{a^2}{s^2}
\biggr)^{t/(t+1)}-1,\qquad  \mbox{iff }
\\
\mu^2& >& a^2 \biggl[ \biggl(1+
\frac{a^2}{s^2} \biggr)^{t/(t+1)}-1 \biggr]^{-1}-s^2.
\end{eqnarray*}
It is enough to recall that $s^2=t$ and thus the right-hand side above
is just $h(a)^2$.

To obtain the formula for $\beta$, note that
\[
\beta=\int p_{\min}(\mu)\,\mathrm{d}\mu= \int_{|\mu|\leq h(a)}
p(\mu|a) \,\mathrm{d}\mu+\int_{|\mu|> h(a)} p(\mu|0)\,\mathrm{d}\mu
\]
and use \eqref{eq: t}.
\end{pf*}

\begin{rem}
It is interesting to compare the asymptotic behaviour of the constants in
Propositions \ref{pr: Driftex} and \ref{pr: Minex} for $a\to\infty$.
We can immediately see that $\lambda^{2}\to1/(t-2)$ and $K^{2}\sim
a^{2}/(t-2)$. Slightly more
tedious computation reveals that $h(a)\sim{\mathit{const}}\cdot
a^{1/(t+1)}$
and consequently
$\beta\sim{\mathit{const}}\cdot a^{-t/(t+1)}$.
\end{rem}


The parameter of interest is the posterior mean (Bayes estimator of
$\mu$). Thus, we
let $f(\mu)=\mu$ and $\theta=\mathbb{E}_{\pi}\mu=0$. Note that
our chain
$\mu_{0},\ldots,\mu_{i},\ldots$
is a sequence of martingale differences, so ${\bar f}=f$ and
\[
\sigma_{\mathrm{as}}^{2}(P,f)=\mathbb{E}_{\pi}
\bigl(f^{2}\bigr)=\frac{t}{t-3}.
\]
The MSE of the estimator $\hat\theta_{n}=\sum_{i=0}^{n-1} \mu_n$
can be also
expressed analytically, namely
\[
  \mathrm{MSE}=\mathbb{E}_{\mu_0}\hat
\theta_{n}^2=\frac{t}{n(t-3)}-\frac
{t(t-2)}{n^2(t-3)^2}
\biggl[1- \biggl(\frac{1}{t-2} \biggr)^n \biggr] +
\frac{t-2}{n^2(t-3)} \biggl[1- \biggl(\frac{1}{t-2} \biggr)^n
\biggr]\mu_0^2.
\]
Obviously, we have $\Vert f\Vert_{V^{1/2}} =1$.

We now proceed to examine the bounds proved in Section \ref{Sec:Geom}
under the geometric drift condition, Assumption \ref{as:
GeoDriftCond}. Inequalities for the asymptotic variance
play the crucial role in our approach. Let us fix $t=50$. Figure \ref{fig2}
shows how our bounds on $\sigma_{\mathrm{as}}(P,f)$
depend on the choice of the small set $J=[-a,a]$.

The gray solid line gives the bound
of Theorem \ref{th: DriftBounds}(ii) which assumes the knowledge of
$\pi V$ (and uses the obvious inequality
$\pi( V^{1/2})\leq(\pi V)^{1/2})$.
The black dashed line corresponds to a bound which involves only
$\lambda$, $K$ and $\beta$. It is
obtained if values of $\pi V$ and $\pi V^{1/2}$ are replaced by their
respective bounds given in
Proposition \ref{pr: Compl}(i) and (ii).

The best values of the bounds, equal to $2.68$ and $2.38$, correspond
to $a=3.91$ and $a=4.30$, respectively.
The actual value of the root asymptotic variance is $\sigma_{\mathrm
{as}}(P,f)=1.031$.
In Table \ref{tab1} below, we summarise
the analogous bounds for three values of $t$.

The results obtained for different values of parameter $t$ lead to
qualitatively similar conclusions. From now on, we keep $t=50$ fixed.

Table \ref{tab2} is analogous to Table \ref{tab1} but focuses on other constants
introduced in Theorem \ref{th:BasicMSE}. Apart from $\sigma_{\mathrm
{as}}(P,f)$, we compare
$C_{0}(P),C_{1}(P,f),C_{2}(P,f)$ with the bounds given in Theorem \ref{th:
DriftBounds} and Proposition \ref{pr: Compl}.
The ``actual values'' of $C_{0}(P),C_{1}(P,f),C_{2}(P,f)$ are computed
via a
long Monte Carlo simulation (in which we identified regeneration epochs).
The bound for $C_{1}(P,f)$ in Theorem \ref{th: DriftBounds}(iii) depends
on $\xi V$, which is typically known, because
usually simulation starts from a deterministic initial point, say $x_0$
(in our experiments, we put $x_0=0$).
As for $C_{2}(P,f)$, its actual value varies with $n$. However, in our
experiments the dependence on $n$ was negligible and has been ignored
(the differences were within the accuracy of the reported computations,
provided that $n\geq10$).

\begin{figure}

\includegraphics{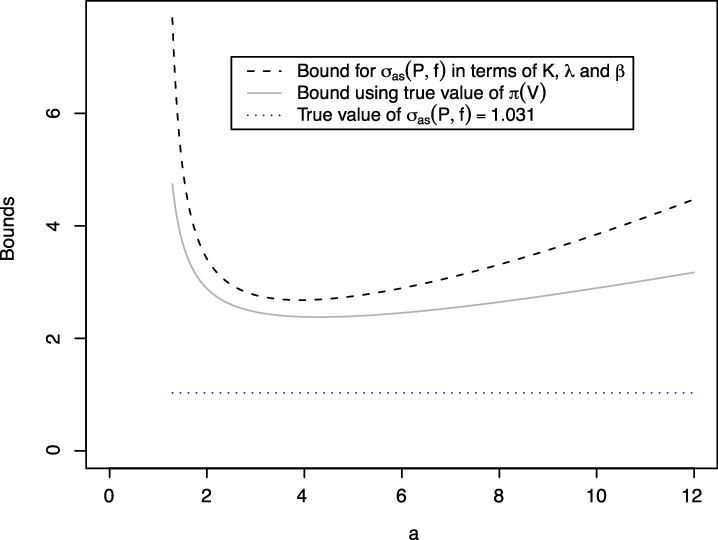}

\caption{Bounds for the root asymptotic variance $\sigma_{\mathrm{as}}(P,f)$
as functions of $a$.}\label{fig2}
\end{figure}

\begin{table}
\caption{Values of $\sigma_{\mathrm{as}}(P,f)$ vs. bounds of Theorem
\protect\ref{th:
DriftBounds}(ii) combined with
Proposition \protect\ref{pr: Compl}(i) and (ii) for different values of $t$}\label{tab1}
\begin{tabular*}{\textwidth}{@{\extracolsep{\fill}}llll@{}}
\hline
$t$& $\sigma_{\mathrm{as}}(P,f)$ & Bound with known $\pi V$ & Bound involving only $\lambda$, $K$, $\beta$ \\
\hline
\phantom{00}5 & 1.581 & 6.40 & 11.89 \\
\phantom{0}50 & 1.031 & 2.38 & \phantom{0}2.68 \\
500& 1.003 & 2.00 & \phantom{0}2.08 \\
\hline
\end{tabular*}
\end{table}

\begin{table}
\caption{Values of the constants appearing in Theorem \protect\ref{th:BasicMSE} vs. bounds of Theorem \protect\ref{th: DriftBounds} combined with
Proposition \protect\ref{pr: Compl}}\label{tab2}
\begin{tabular*}{\textwidth}{@{\extracolsep{\fill}}llll@{}}
\hline
 Constant& Actual value& Bound with known $\pi V$ & Bound involving only $\lambda$, $K$, $\beta$ \\
\hline
$C_{0}(P)$ & 0.568 & 1.761 & 2.025 \\
$C_{1}(P,f)$ & 0.125 & -- & 2.771 \\
$C_{2}(P,f)$ & 1.083 & -- & 3.752 \\
\hline
\end{tabular*}
\end{table}

Finally, let us compare the actual values of the root mean square
error, $\mathrm{RMSE}: =  \sqrt{\mathbb{E}_{\xi} (\hat\theta_{n}-\theta)^2}$,
with the bounds
given in Theorem \ref{th:BasicMSE}.
In column (a), we use the formula \eqref{eq:BasicMSE_bound} with
``true'' values of $\sigma_{\mathrm{as}}(P,f)$ and
$C_{0}(P),C_{1}(P,f),C_{2}(P,f)$ given
by \eqref{eqdef:asvar} and \eqref{eqdef:ctwo}.
Column (b) is obtained by replacing those constants by their bounds
given in Theorem \ref{th: DriftBounds} and
using the true value of $\pi V$. Finally, the bounds involving only
$\lambda$, $K$, $\beta$ are in column (c).

\begin{table}
\caption{RMSE, its bound in Theorem \protect\ref{th:BasicMSE} and further
bounds based Theorem \protect\ref{th: DriftBounds} combined with
Proposition \protect\ref{pr: Compl}}\label{tab3}
\begin{tabular*}{\textwidth}{@{\extracolsep{\fill}}lllll@{}}
\hline
&  & \multicolumn{3}{l@{}}{Bound \eqref{eq:BasicMSE_bound} } \\[-5pt]
&  & \multicolumn{3}{l@{}}{\hrulefill} \\
$n$ & $\sqrt{n}$ RMSE& (a) & (b) & (c) \\
\hline
\phantom{000,}10 & 0.98 & 1.47 & 4.87 & 5.29\\
\phantom{000,}50 & 1.02 & 1.21 & 3.39 & 3.71\\
\phantom{00,}100 & 1.03 & 1.16 & 3.08 & 3.39\\
\phantom{0,}1000 & 1.03 & 1.07 & 2.60 & 2.89\\
\phantom{0,}5000 & 1.03 & 1.05 & 2.48 & 2.77\\
10,000 & 1.03 & 1.04 & 2.45 &2.75\\
50,000 & 1.03 & 1.04 & 2.41 & 2.71\\
\hline
\end{tabular*}
\end{table}

Table \ref{tab3} clearly shows that the inequalities in Theorem \ref
{th:BasicMSE} are quite sharp.
The bounds on RMSE in column (a) become almost exact for large $n$.
However, the bounds on the constants in terms of minorization/drift
parameters are far from being tight.
While constants $C_{0}(P),C_{1}(P,f),C_{2}(P,f)$ have relatively small
influence, the problem of bounding $\sigma_{\mathrm{as}}(P,f)$ is
of primary importance.

This clearly identifies the bottleneck of the approach:
the bounds on $\sigma_{\mathrm{as}}(P,f)$ under drift condition in
Theorem \ref{th:
DriftBounds} and Proposition \ref{pr: Compl}
can vary widely in their sharpness in specific examples. We conjecture
that this may be the case in general for any bounds derived under
drift conditions. Known bounds on the rate of convergence (e.g., in
total variation norm) obtained under drift conditions are typically very
conservative, too (e.g., \cite{baxendale2005renewal,roberts1999bounds,jones2004sufficient}).
However, at present, drift conditions remain the main and most
universal tool for proving computable bounds for Markov chains on
continuous spaces.
An alternative might be working with conductance but to the best of our
knowledge, so far this approach has been applied successfully only to examples
with compact state spaces (see, e.g., \cite{rudolf2009explicit,mathe2007simple} and references therein).

\subsection{A Poisson--Gamma model} \label{subsec:pg}

Consider a hierarchical Bayesian model applied to a well-known pump
failure data set
and analysed in several papers (e.g., \cite{gelfand1990sampling,tierney1994markov,mykland1995regeneration,rosenthal1995minorization}).
Data are available for example, in \cite{MR1998913}, R package
``SMPracticals'' or in the cited Tierney's paper. They consist of
$m=10$ pairs $(y_i,t_i)$ where
$y_i$ is the number of failures for $i$th pump, during $t_i$ observed hours.
The model assumes that:
\begin{eqnarray*}
y_{i} &\sim&\operatorname{Poiss}(t_i\phi_i),
\qquad \mbox{conditionally independent for } i=1,\ldots, m,
\\
\phi_i &\sim&\operatorname{Gamma}(\alpha,r), \qquad \mbox
{conditionally i.i.d. for } i=1,\ldots,m,
\\
r & \sim&\operatorname{Gamma}(\sigma,\gamma).
\end{eqnarray*}
The posterior distribution of parameters $\phi=(\phi_1,\ldots,\phi
_m)$ and $r$ is
\[
p(\phi,r|y) \propto \Biggl(\prod
_{i=1}^{m}
\phi_i^{y_i} \mathrm{e}^{-t_i\phi_i} \Biggr)\cdot\Biggl(\prod
\limits
_{i=1}^{m}r^{\alpha}
\phi_i^{\alpha-1}\cdot \mathrm{e}^{-r\phi_i} \Biggr) \cdot
r^{\sigma-1} \mathrm{e}^{-\gamma r},
\]
where $\alpha$, $\sigma$, $\gamma$ are known hyperparameters.
The Gibbs sampler updates cyclically $r$ and $\phi$ using the
following conditional distributions:
\begin{eqnarray*}
r | \phi,y &\sim& \operatorname{Gamma} \Bigl(m\alpha+\sigma, \gamma+\sum
\phi_i \Bigr),
\\
\phi_i|\phi_{-i},r,y &\sim& \operatorname{Gamma}
(y_i+\alpha,t_i+r ).
\end{eqnarray*}
In what follows, the numeric results correspond to the same
hyperparameter values
as in the above cited papers: $\alpha=1.802$, $\sigma=0.01$ and
$\gamma=1$.
For these values, Rosenthal in \cite{rosenthal1995minorization} \mbox{constructed}
a small set $J=\{(\phi,r)\dvt 4\leq\sum\phi_i\leq9\}$ which satisfies
the one-step minorization condition
(our Assumption~\ref{as:SmallSetCond}) and established a geometric
drift condition towards $J$ (our Assumption~\ref{as: GeoDriftCond}) with
$V(\phi,r)=1+(\sum\phi_i-6.5)^2$.
The minorization and drift constants were the following:
\[
\beta=0.14,\qquad  \lambda=0.46,\qquad  K=3.3.
\]
Suppose we are to estimate the posterior expectation of a component
$\phi_i$. To get a bound on
the (root-) MSE of the MCMC estimate, we combine Theorem~\ref
{th:BasicMSE} with Proposition~\ref{th: DriftBounds} and
Proposition~\ref{pr: Compl}.
Suppose we start simulations at a point with $\sum\phi_i=6.5$ that
is, with initial value of $V$ equal to 1.
To get a better bound on $\Vert \bar{f}\Vert _{V^{1/2}}$ via
Proposition~\ref{pr: Compl}(v), we first reduce $\Vert f\Vert _{V^{1/2}}$ by a vertical shift, namely we put $f(\phi,r)=\phi_i-b$ for
$b=3.327$ (expectation
of $\phi_i$ can be immediately recovered from that of $\phi_i-b$).
Elementary and easy calculations
show that $\Vert f\Vert _{V^{1/2}}\leq3.327$. We also use the bound
taken from Proposition~\ref{pr: Compl}(ii) for
$\pi(V)$ and the inequality $\pi(V^{1/2})\leq\pi(V)^{1/2}$. Finally, we obtain
the following values of the constants:
\[
\sigma_{\mathrm{as}}(P,f)\leq 171.6\quad  \mbox{and}\quad
C_{0}(P)\leq 27.5,\qquad  C_{1}(P,f)\leq 547.7,\qquad
C_{2}(P,f)\leq 676.1.
\]

\subsection{Contracting normals} \label{subsec:cn}

As discussed in the \hyperref[sec1]{Introduction}, the results of the present paper
improve over earlier MSE bounds of \cite{eps_alpha} for geometrically
ergodic chains in that they are much more generally applicable and also
tighter. To illustrate the improvement in tightness, we analyze the MSE
and confidence estimation for the contracting normals toy-example
considered in~\cite{eps_alpha}.

For the Markov chain transition kenel
\[
P(x, \cdot) = \mathrm{N}\bigl(c x, 1- c^2\bigr),\qquad  \mbox{with}
|c|<1,  \mbox{ on } \mathcal{X} = \mathbb{R},
\]
with stationary distribution $\mathrm{N}(0,1),$ consider estimating the mean,
that is, put $f(x) = x$. Similarly as in \cite{eps_alpha} we take a drift
function $V(x) = 1+ x^2$ resulting in $\Vert f\Vert _{V^{1/2}}=1$. With the
small set $J =[-d,d]$ with $d>1,$ the drift and regeneration parameters
can be identified as
\begin{eqnarray*}
\lambda&=& c^2 + \frac{1\bigl(1-c^2\bigr) }{1+d^2}
< 1,\qquad  K = 2+c^2\bigl(d^2-1\bigr),\\
 \beta&=& 2
\biggl[\Phi\biggl(\frac{(1+|c|)d }{\sqrt{1-c^2}}\biggr) - \Phi
\biggl(
\frac{|c|d }{\sqrt{1-c^2}} \biggr)\biggr],
\end{eqnarray*}
where $\Phi$ stands for the standard normal c.d.f. We refer to \cite
{eps_alpha,baxendale2005renewal} for details on these elementary calculations.

To compare with the results of \cite{eps_alpha}, we aim at confidence
estimation of the mean. First, we combine Theorem~\ref{th:BasicMSE}
with Proposition~\ref{th: DriftBounds} and Proposition~\ref{pr:
Compl} to upperbound the MSE of $\hat{\theta}_n$ and next we use the
Chebyshev inequality. We derive the resulting minimal simulation length
$n$ guaranteeing
\[
\mathbb{P}\bigl(|\hat{\theta}_n -\theta| < \varepsilon\bigr) > 1-\alpha,
\qquad \mbox{with } \varepsilon= \alpha= 0.1.
\]
This is equivalent to finding minimal $n$ s.t.
\[
\operatorname{MSE}(\hat{\theta}_n) \leq\varepsilon^2
\alpha.
\]
Note that for small values of $\alpha$ a median trick can be applied
resulting in an exponentially tight bounds, see \cite
{niemiro2009fixed,eps_alpha,latuszynski2011nonasymptotic} for
details. The value of $c$ is set to $0.5$ and the small set half width
$d$ has been optimised numerically for each method yielding $d=1.6226$
for the bounds from \cite{eps_alpha} and $d= 1.7875$ for the results
based on our Section~\ref{Sec:Geom}. The chain is initiated at $0$,
that is, $\xi= \delta_0.$ Since in this setting the exact
distribution of $\hat{\theta}_n$ can be computed analytically, both
bounds are compared to reality, which is the exact true simulation
effort required for the above confidence estimation.

As illustrated by Table \ref{tab4}, we obtain an improvement of $5$ orders of
magnitude compared to \cite{eps_alpha} and remain less then $2$ orders
of magnitude off the truth.

\begin{table}
\caption{Comparison of the total simulation effort $n$ required for
nonasymptotic confidence estimation $ \mathbb{P}(|\hat{\theta}_n
-\theta| < \varepsilon) > 1-\alpha$ with $\varepsilon= \alpha= 0.1$
and the target function $f(x) = x$}\label{tab4}
\begin{tabular*}{\textwidth}{@{\extracolsep{\fill}}llll@{}}
\hline
Bound involving only $\lambda$, $K$, $\beta$ & Bound with known $\pi
V$ & Bound from \cite{eps_alpha} & Reality \\
\hline
77,285 & 43,783 & 6,460,000,000 & 811 \\
\hline
\end{tabular*}
\end{table}

\section{Preliminary lemmas} \label{Sec:Lemmas}

Before we proceed to the proofs for Sections \ref{Sec:Geom} and \ref
{Sec:Poly},
we need some auxiliary results that might be of independent interest.

We work under Assumptions \ref{as:SmallSetCond} (small set) and \ref
{as:DriftPoly} (the drift condition).
Note that Assumption~\ref{as: GeoDriftCond} is the special case of Assumption \ref
{as:DriftPoly}, with $\alpha=1$.
Assumption \ref{as: GeoDriftCond} implies
%
\begin{equation}
\label{eq: Drift} PV^{1/2}(x) \leq%
\cases{ \lambda^{1/2} V^{1/2}(x),& \quad for $x\notin J$,
\cr
K^{1/2},& \quad for $x\in J$,
}
\end{equation}
because by Jensen's inequality $PV^{1/2}(x)\leq\sqrt{PV(x)}$.
Whereas for $\alpha< 1,$ Lemma 3.5 of \cite{jarner2002polynomial}
for all $\eta\leq1$ yields
%
\begin{equation}
\label{eq:drift_app_eta} PV^\eta(x)
\leq%
\cases{ V^\eta(x)-\eta(1-\lambda)
V(x)^{\eta+\alpha-1},& \quad for $x\notin J$,
\cr
K^\eta,& \quad for $x\in J$.
}
\end{equation}

The following lemma is a well-known fact which appears for example, in
\cite{nummelin2002mc} (for bounded~$g$).
The proof for nonnegative function $g$ is the same.

\begin{lem}\label{lem: SquareBlock}
If $g\geq0$, then
\[
\mathbb{E}_{\nu}\Xi(g)^{2}=\mathbb{E}_{\nu} T \Biggl(
\mathbb{E}_\pi g(X_0)^2+2\sum
_{n=1}^\infty\mathbb{E}_\pi g(X_0)g(X_n)
\mathbb{I}(T >n) \Biggr).
\]
\end{lem}

We shall also use the generalised Kac lemma, in the following form that
follows as an easy corollary from Theorem 10.0.1 of
\cite{meyn1993markov}.

\begin{lem} If $\pi(|f|) < \infty,$ then \label{lem:Kac}
%
\begin{eqnarray}\label{eq:def_tau}
\pi(f) &= & \int_J \mathbb{E}_{x} \sum
_{i=1}^{\tau(J)}f(X_i) \pi(
\mathrm{d}x),\qquad  \mbox{where}
\nonumber\\[-8pt]\\[-8pt]
\tau(J) & := & \min\{n > 0\dvt X_n \in J\}.\nonumber
\end{eqnarray}
\end{lem}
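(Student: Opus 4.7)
The plan is to apply Theorem 10.0.1 of~\cite{meyn1993markov} not directly to $f$ but to $Pf$, and then to invoke stationarity. Theorem 10.0.1 gives, for any $g$ with $\pi(|g|)<\infty$,
\begin{equation*}
\pi(g) \;=\; \int_J \pi(\dx)\,\Ex_x\sum_{i=0}^{\tau(J)-1} g(X_i),
\end{equation*}
and applying it with $g=Pf$ is legitimate because $\pi(|Pf|)\leq \pi P(|f|)=\pi(|f|)<\infty$.

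Next, I would rewrite the block sum on the right-hand side using $Pf(X_i)=\Ex[f(X_{i+1})\mid\mathcal{F}_i]$. Since $\tau(J)$ is a stopping time, the event $\{i<\tau(J)\}$ is $\mathcal{F}_i$-measurable, so the tower property yields, for every $i\geq 0$,
\begin{equation*}
\Ex_x\bigl[Pf(X_i)\Ind(i<\tau(J))\bigr]\;=\;\Ex_x\bigl[f(X_{i+1})\Ind(i+1\leq\tau(J))\bigr].
\end{equation*}
Summing over $i\geq 0$ and re-indexing gives
\begin{equation*}
\Ex_x\sum_{i=0}^{\tau(J)-1}Pf(X_i)\;=\;\Ex_x\sum_{j=1}^{\tau(J)}f(X_j),
\end{equation*}
and combining this with the Meyn--Tweedie identity above together with $\pi(Pf)=\pi(f)$ produces the statement of the lemma.

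The only minor obstacle is justifying the interchange of sum and expectation for $f$ that is not sign-definite. I would handle it by the standard splitting $f=f^+-f^-$ and applying the nonnegative form of Theorem 10.0.1 to $|f|$; the resulting identity $\int_J\pi(\dx)\,\Ex_x\sum_{i=0}^{\tau(J)-1}|f|(X_i)=\pi(|f|)<\infty$ supplies the absolute convergence needed for Fubini at $\pi$-a.e.\ starting point $x\in J$. No deeper difficulty is involved: the lemma is essentially a one-step rewriting of the classical Kac formula via the Markov property.
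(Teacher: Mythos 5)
Your argument is correct. The paper itself offers no written proof of this lemma --- it simply declares it ``an easy corollary'' of Theorem 10.0.1 of Meyn and Tweedie, whose statement is in fact already in the form $\pi(B)=\int_J\pi(\dx)\,\Ex_x\sum_{n=1}^{\tau(J)}\Ind(X_n\in B)$; the intended derivation is thus the routine extension from indicators to simple functions, to nonnegative $f$ by monotone convergence, and to general $f$ with $\pi(|f|)<\infty$ by writing $f=f^+-f^-$. You instead start from the equivalent ``$\sum_{i=0}^{\tau(J)-1}$'' form of the Kac representation, apply it to $Pf$, and shift the index by one via the tower property; together with $\pi(Pf)=\pi(f)$ this lands exactly on the stated identity, and your handling of absolute convergence (running the same computation for $|f|$ to get $\int_J\pi(\dx)\,\Ex_x\sum_{j=1}^{\tau(J)}|f|(X_j)=\pi(|f|)<\infty$, hence finiteness $\pi$-a.e.\ on $J$) is exactly what is needed to justify the interchange. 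The only caveat is bookkeeping: you should check which of the two equivalent forms the cited Theorem 10.0.1 actually states. If it is already the ``$\sum_{n=1}^{\tau(J)}$'' form, your detour through $Pf$ is unnecessary and the lemma follows by the standard extension argument alone; if you insist on starting from the ``$\sum_{i=0}^{\tau(J)-1}$'' form, your one-step Markov shift is precisely the right bridge. Either way, what your route buys is a self-contained reduction that makes the role of stationarity ($\pi P=\pi$) explicit, at the cost of an extra appeal to the Markov property that the direct reading of the theorem avoids.
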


The following lemma is related to other calculations in the drift
conditions setting, for example, \cite{baxendale2005renewal,lund1996geometric,douc2004quantitative,rosenthal2002quantitative,fort2003computable,douc2008bounds}.

\begin{lem}\label{lem: Bax} If Assumptions \ref{as:SmallSetCond} and
\ref{as:DriftPoly} hold, then for all $\eta\leq1$
\begin{eqnarray*}
\mathbb{E}_x \sum_{n=1}^{T-1}V^{\alpha+\eta-1}(X_n)&
\leq& \frac{V^\eta
(x)-1+\eta(1-\lambda) -\eta(1-\lambda)V^{\alpha+\eta-1}(x)}{\eta
(1-\lambda)}\mathbb{I}(x\notin J)
\\
&&{} + \frac{K^\eta-1}{\beta\eta(1-\lambda)}+ \frac{1 }{\beta}-1
\\
&\leq& \frac{V^\eta(x) }{\eta(1-\lambda)}+ \frac{K^\eta-1- \beta
}{\beta\eta(1-\lambda)}+\frac{1 }{\beta}-1 \qquad \mbox{(if
additionally }\alpha+ \eta\geq1).
\end{eqnarray*}
\end{lem}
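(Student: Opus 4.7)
The plan is to split the tour $\{0,\ldots,T-1\}$ along the successive visits of the chain to $J$, namely $\sigma_1<\sigma_2<\cdots<\sigma_N$, where $\sigma_1=0$ if $x\in J$ and $\sigma_1=\tau(J)$ otherwise, $\sigma_N=T-1$, and $N$ is geometric$(\beta)$ with $\Ex N=1/\beta$: by the split construction each visit to $J$ independently yields regeneration with probability $\beta$. Writing $g:=V^{\alpha+\eta-1}$ and, with the convention $\sigma_0:=-1$, $L_i:=\sum_{k=\sigma_{i-1}+1}^{\sigma_i}g(X_k)$, one decomposes $\sum_{k=0}^{T-1}g(X_k)=\sum_{i=1}^{N}L_i$ and bounds each $L_i$ using the drift \eqref{eq:drift_app_eta}.

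The key estimate is on the one-excursion integral $\psi(z):=\Ex_z\sum_{k=0}^{\tau(J)}g(X_k)$. For $z\notin J$ the drift together with optional sampling for the $V^\eta$-supermartingale on $J^c$ gives $\Ex_z\sum_{k=0}^{\tau(J)-1}g(X_k)\le(V^\eta(z)-\Ex_zV^\eta(X_{\tau(J)}))/(\eta(1-\lambda))$; adding the terminal contribution $\Ex_zg(X_{\tau(J)})\le\Ex_zV^\eta(X_{\tau(J)})$ (since $g\le V^\eta$ by $V\ge1$ and $\alpha\le1$) and using both $\Ex_zV^\eta(X_{\tau(J)})\ge1$ and $\eta(1-\lambda)\le 1$ produces the sharp bound $\psi(z)\le(V^\eta(z)-1)/(\eta(1-\lambda))+1$; for $z\in J$ trivially $\psi(z)=g(z)$. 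For $i\ge2$, conditioning on $\{N\ge i\}$ forces the bell at $\sigma_{i-1}$ to have failed, so $X_{\sigma_{i-1}+1}\sim Q(X_{\sigma_{i-1}},\cdot)$ and by the strong Markov property $\Ex[L_i\mid X_{\sigma_{i-1}},N\ge i]=(Q\psi)(X_{\sigma_{i-1}})$. Using $(1-\beta)QV^\eta(x)=PV^\eta(x)-\beta\nu(V^\eta)\le K^\eta-\beta$ for $x\in J$ (invoking $PV^\eta\le K^\eta$ on $J$ and $\nu(V^\eta)\ge1$) then yields $\Ex[L_i\mid N\ge i]\le(K^\eta-1)/((1-\beta)\eta(1-\lambda))+1$, and summing the geometric series $\sum_{i\ge2}(1-\beta)^{i-1}$ gives $\Ex_x\sum_{i=2}^{N}L_i\le(K^\eta-1)/(\beta\eta(1-\lambda))+1/\beta-1$.

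To conclude, for $x\in J$ we have $\sum_{n=1}^{T-1}g(X_n)=\sum_{i=2}^{N}L_i$, which immediately yields the $\Ind(x\notin J)=0$ branch of the stated bound. For $x\notin J$, $\sum_{n=1}^{T-1}g(X_n)=L_1-g(x)+\sum_{i=2}^{N}L_i$ and inserting $\Ex_xL_1=\psi(x)\le(V^\eta(x)-1)/(\eta(1-\lambda))+1$ rearranges into exactly $(V^\eta(x)-1+\eta(1-\lambda)-\eta(1-\lambda)g(x))/(\eta(1-\lambda))+(K^\eta-1)/(\beta\eta(1-\lambda))+1/\beta-1$. The second displayed inequality (valid under $\alpha+\eta\ge1$, so $g\ge1$) follows by replacing $-\eta(1-\lambda)g(x)$ with $-\eta(1-\lambda)$ in the first bound and then verifying, thanks to $V^\eta\ge1$, that the resulting uniform expression $V^\eta(x)/(\eta(1-\lambda))+(K^\eta-1-\beta)/(\beta\eta(1-\lambda))+1/\beta-1$ also dominates the $x\in J$ branch.

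I expect the main obstacle to be the tight accounting of additive constants at the interface between excursions and visits to $J$: a crude estimate $\psi(z)\le V^\eta(z)/(\eta(1-\lambda))+1$ would close the argument qualitatively, but loses a stray $1/(\eta(1-\lambda))$ that accumulates through the geometric sum and spoils the coefficient of $K^\eta-1$. Securing the sharper $\psi(z)\le(V^\eta(z)-1)/(\eta(1-\lambda))+1$ requires the simultaneous use of $V^\eta(X_{\tau(J)})\ge1$, $\eta(1-\lambda)\le 1$ and $g\le V^\eta$, and then propagating the resulting ``$-1$'' cleanly through the geometric sum so that it combines with $1/\beta$ to reproduce the exact constants in the statement.
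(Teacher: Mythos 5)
Your proof is correct and takes essentially the same route as the paper's: your $\psi$ is the paper's excursion functional $H$, your bound $\psi(z)\le (V^\eta(z)-1)/(\eta(1-\lambda))+1$ is exactly the paper's $H(x)\le (V(x)-\lambda)/(1-\lambda)$ obtained by the same telescoping of the drift up to the first visit to $J$, and your geometric summation over failed bells combined with the residual-kernel step $(1-\beta)QV^\eta\le K^\eta-\beta$ on $J$ reproduces the paper's $\tilde H$ computation, giving identical constants. The only difference is cosmetic: you carry general $\eta$ throughout, whereas the paper writes out $\eta=1$ and notes that the general case is identical via \eqref{eq:drift_app_eta}.
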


\begin{cor} \label{cor_after_main_lemma} For $ \mathbb{E}_x \sum
_{n=0}^{T-1}V^{\alpha+\eta-1}(X_n)$, we need to add the term
$V^{\alpha+\eta-1}(x)$. Hence,
%
\begin{eqnarray*}
\mathbb{E}_x \sum_{n=0}^{T-1}V^{\alpha+\eta-1}(X_n)&
\leq& \frac{V^\eta
(x)-1+\eta(1-\lambda) -\eta(1-\lambda)V^{\alpha+\eta-1}(x)}{\eta
(1-\lambda)}
\\
&&{} 
+ \frac{K^\eta-1}{\beta\eta(1-\lambda)}+ {1 \over\beta
}-1+V^{\alpha+\eta-1}(x)
\\
&=& \frac{V^\eta(x) }{\eta(1-\lambda)}+ \frac{K^\eta-1-\beta
}{\beta\eta(1-\lambda)}+{1 \over\beta}.
\end{eqnarray*}
\end{cor}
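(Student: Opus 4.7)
The corollary follows directly from Lemma~\ref{lem: Bax} by isolating the $n=0$ contribution and then simplifying algebraically. The plan is as follows.

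First, I would write
$\Ex_x\sum_{n=0}^{T-1}V^{\alpha+\eta-1}(X_n) = V^{\alpha+\eta-1}(x) + \Ex_x\sum_{n=1}^{T-1}V^{\alpha+\eta-1}(X_n)$
and apply the indicator-carrying bound of Lemma~\ref{lem: Bax} to the second summand. To match the corollary's first displayed inequality, which is stated without the indicator $\Ind(x\not\in J)$, I would verify that the fraction $\frac{V^\eta(x)-1+\eta(1-\lambda)-\eta(1-\lambda)V^{\alpha+\eta-1}(x)}{\eta(1-\lambda)}$ is nonnegative for every $x$, so that adding it for $x \in J$ preserves the inequality. Nonnegativity rearranges to $V^{\alpha+\eta-1}(x) \leq 1 + (V^\eta(x)-1)/[\eta(1-\lambda)]$, which holds because (a) $V^{\alpha+\eta-1}(x) \leq V^\eta(x)$ since $V \geq 1$ and $\alpha \leq 1$ force $\alpha+\eta-1 \leq \eta$, and (b) $\eta(1-\lambda) \leq 1$ since $\eta \leq 1$ and $\lambda \in (0,1)$.

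Second, I would perform the algebraic simplification: the added $+V^{\alpha+\eta-1}(x)$ cancels the $-V^{\alpha+\eta-1}(x)$ obtained by dividing $-\eta(1-\lambda)V^{\alpha+\eta-1}(x)$ in the numerator by $\eta(1-\lambda)$, and consolidating the remaining numerical constants yields
$\frac{V^\eta(x)-1+\eta(1-\lambda)}{\eta(1-\lambda)}+\frac{K^\eta-1}{\beta\eta(1-\lambda)}+\frac{1}{\beta}-1 = \frac{V^\eta(x)}{\eta(1-\lambda)}+\frac{K^\eta-1-\beta}{\beta\eta(1-\lambda)}+\frac{1}{\beta}$,
which is the final expression stated in the corollary. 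No step presents a genuine obstacle; the whole argument is a brief bookkeeping exercise on top of Lemma~\ref{lem: Bax}, the only ``soft'' point being the elementary observation that both $V^{\alpha+\eta-1}\leq V^\eta$ and $\eta(1-\lambda)\leq 1$ hold under the standing hypotheses, so that the indicator $\Ind(x\not\in J)$ may be dropped without loss.
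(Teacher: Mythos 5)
Your argument is correct and is essentially the paper's own (one-line) proof: add the $n=0$ term $V^{\alpha+\eta-1}(x)$ to the first bound of Lemma~\ref{lem: Bax} and simplify, the cancellation and constant bookkeeping checking out exactly as you describe. Your explicit verification that the bracketed fraction is nonnegative (via $V^{\alpha+\eta-1}\leq V^{\eta}$ and $\eta(1-\lambda)\leq 1$), so that the indicator $\Ind(x\notin J)$ may be dropped, is a step the paper leaves implicit, and it is a correct and welcome addition.
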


In the case of geometric drift, the second inequality in Lemma~\ref
{lem: Bax} can be replaced by a slightly better bound.
For $\alpha=\eta=1$, the first inequality in Lemma~\ref{lem: Bax}
entails the following.

\begin{cor}\label{geoBax} If Assumptions \ref{as:SmallSetCond} and
\ref{as: GeoDriftCond} hold, then
\[
\mathbb{E}_x \sum_{n=1}^{T-1}V(X_n)
\leq\frac{\lambda
V(x)}{1-\lambda}+ \frac{K-\lambda-\beta}{\beta(1-\lambda)}.
\]
\end{cor}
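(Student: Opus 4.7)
The plan is to derive the corollary as an immediate specialization of Lemma~\ref{lem: Bax} at $\alpha = \eta = 1$, since in that case the summand $V^{\alpha+\eta-1}(X_n)$ coincides with $V(X_n)$, matching the left-hand side of Corollary~\ref{geoBax}. No fresh probabilistic argument is required, because Lemma~\ref{lem: Bax} already encodes the geometric drift combined with a Kac-type decomposition of the excursion between regenerations; all that remains is algebraic consolidation of the right-hand side.

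Concretely, I would first substitute $\alpha = \eta = 1$ into the first (sharper) inequality of Lemma~\ref{lem: Bax} to obtain
$$\Ex_x \sum_{n=1}^{T-1} V(X_n) \;\leq\; \frac{V(x) - 1 + (1-\lambda) - (1-\lambda)V(x)}{1-\lambda}\Ind(x \notin J) + \frac{K-1}{\beta(1-\lambda)} + \frac{1}{\beta} - 1.$$
The numerator inside the indicator telescopes to $\lambda V(x) - \lambda$, so the indicator piece equals $\frac{\lambda(V(x)-1)}{1-\lambda}\Ind(x \notin J)$. The three remaining constants combine, over the common denominator $\beta(1-\lambda)$, to $\frac{K - \lambda - \beta(1-\lambda)}{\beta(1-\lambda)}$.

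For $x \notin J$ these two pieces add up to $\frac{\lambda V(x)}{1-\lambda} + \frac{K - \lambda - \beta}{\beta(1-\lambda)}$ once the $-\lambda/(1-\lambda)$ contribution from $\lambda(V(x)-1)/(1-\lambda)$ is absorbed into the constant, yielding the stated bound. For $x \in J$ only the constant survives, and I would verify it is dominated by $\frac{\lambda V(x)}{1-\lambda} + \frac{K - \lambda - \beta}{\beta(1-\lambda)}$; after clearing denominators this reduces to the trivial inequality $V(x) \geq 1$, which holds since $V \geq 1$ by assumption. The only risk in the argument is sign/grouping errors when combining the constant terms; mathematically there is no obstacle beyond what is already packaged in Lemma~\ref{lem: Bax}.
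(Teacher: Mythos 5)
Your proposal is correct and is exactly the paper's route: the corollary is obtained by setting $\alpha=\eta=1$ in the first inequality of Lemma~\ref{lem: Bax} and simplifying, and your algebra (the numerator collapsing to $\lambda(V(x)-1)$, the constants combining to $\tfrac{K-\lambda-\beta(1-\lambda)}{\beta(1-\lambda)}$, and the case $x\in J$ reducing to $V(x)\geq 1$) checks out. The paper merely states this specialization without writing it out, so your version simply makes the same derivation explicit.
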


\begin{pf*}{Proof of Lemma \ref{lem: Bax}}
The proof is given for $\eta=1,$ because for $\eta< 1$ it is
identical and the constants can be obtained from \eqref{eq:drift_app_eta}.

Let $S:=S_{0}:=\min\{n\geq0\dvt X_{n}\in J\}$ and $S_{j}:=\min\{
n>S_{j-1}\dvt X_{n}\in J\}$
for $j=1,2,\ldots.$ Moreover, set
\begin{eqnarray*}
  H(x) &:=& \mathbb{E}_{x}\sum
_{n=0}^{S}V^\alpha(X_{n}),
\\
\tilde{H} &:=& \sup_{x\in J} \mathbb{E}_{x} \Biggl(\sum
_{n=1}^{S_{1}} V^\alpha(X_{n})
\Big|\Gamma_{0}=0 \Biggr) = \sup_{x\in J} \int Q(x,
\mathrm{d}y)H(y).
\end{eqnarray*}
Note that $H(x)=V^\alpha(x)$ for $x\in J$ and recall that $Q$ denotes
the normalized ``residual kernel'' defined in Section \ref{Sec:Regen}.

We will first show that
%
\begin{equation}
\label{eq:BlazejLem} H(x) \leq\frac{V(x)-\lambda}{1-\lambda}
\qquad \mbox{for } x\in\mathcal{X}.
\end{equation}

Let $\mathcal{F}_n=\sigma(X_0,\ldots,X_n)$ and remembering that
$\eta= 1,$
rewrite \eqref{eq:drift_app_eta} as
%
\begin{equation}
\label{as:drift_app1}
V(X_n)^\alpha
\mathbb{I}(X_n\notin J) \leq\frac{1}{1-\lambda
}
\bigl[V(X_n)-\mathbb{E}\bigl(V(X_{n+1})|\mathcal{F}_n\bigr)
\bigr]\mathbb{I}(X_n\notin J).
\end{equation}
Fix $x\notin J$. Since $\{X_n\notin J\}\supseteq\{S>n\}\in\mathcal{F}_n$,
we can apply \eqref{as:drift_app1} and write
\begin{eqnarray*}
  \mathbb{E}_x \sum
_{n=0}^{(S-1)\land m}V^\alpha(X_{n}) &=&
\mathbb{E}_x\sum_{n=0}^{m}V^\alpha(X_{n})
\mathbb{I}(S>n)
\\
& \leq&\frac{1}{1-\lambda}\sum_{n=0}^{m}
\mathbb{E}_x \bigl[V(X_n)-\mathbb{E}\bigl(V(X_{n+1})|
\mathcal{F}_n\bigr) \bigr]\mathbb{I}(S>n)
\\
& = &\frac{1}{1-\lambda}\sum_{n=0}^{m}
\bigl[\mathbb{E}_x V(X_n)\mathbb{I}(S>n)-\mathbb{E}_x\mathbb{E}
\bigl(V(X_{n+1})\mathbb{I}(S>n)|\mathcal{F}_n\bigr) \bigr]
\\
& =& \frac{1}{1-\lambda}\sum_{n=0}^{m}
\bigl[ \mathbb{E}_x V(X_n)\mathbb{I}(S>n)-\mathbb{E}_x
V(X_{n+1})\mathbb{I}(S>n+1)
\\
&&\hspace*{42pt}{} -
\mathbb{E}_x V(X_{n+1})\mathbb{I}(S=n+1) \bigr]
\\
& \leq&\frac{1}{1-\lambda} \Biggl[V(x)-\mathbb{E}_x
V(X_{m+1})\mathbb{I}(S>m+1)\\
&&\hspace*{31pt}{}-\sum_{n=0}^{m}
\mathbb{E}_x V(X_{n+1})\mathbb{I}(S=n+1) \Biggr]
\\
& = &\frac{V(x)-\mathbb{E}_x V(X_{S\land(m+1)})}{1-\lambda},
\end{eqnarray*}
so
\begin{eqnarray*}
  \mathbb{E}_x \sum
_{n=0}^{S\land(m+1)}V^\alpha(X_{n}) &=&
\mathbb{E}_x \sum_{n=0}^{(S-1)\land m}V^\alpha(X_{n})+
\mathbb{E}_x V^{\alpha}(X_{S\land(m+1)})
\\
&\leq&\frac{V(x)-\mathbb{E}_x V(X_{S\land(m+1)})}{1-\lambda
}+\mathbb{E}_x V(X_{S\land(m+1)})
\\
&=& \frac{V(x)-\lambda\mathbb{E}_x V(X_{S\land
(m+1)})}{1-\lambda} \leq\frac{V(x)-\lambda}{1-\lambda}.
\end{eqnarray*}
Letting $m \to\infty$ yields equation \eqref{eq:BlazejLem} for
$x\notin J$. For $x\in J$, \eqref{eq:BlazejLem} is obvious.

Next, from Assumption \ref{as:DriftPoly} we obtain $PV(x)=(1-\beta
)QV(x)+\beta\nu V\leq K$ for $x\in J$, so
$QV(x)\leq(K-\beta)/(1-\beta)$ and, taking into account \eqref
{eq:BlazejLem},
%
\begin{equation}
\label{eq: HJ} \tilde{H} \leq\frac{(K-\beta)/(1-\beta)-\lambda
}{1-\lambda} =
\frac{K-\lambda-\beta(1-\lambda)}{(1-\lambda)(1-\beta)}.
\end{equation}
Recall that $T:=\min\{n\geq1\dvt \Gamma_{n-1}=1\}$. For $x\in J$, we
thus have 
\begin{eqnarray*}
\mathbb{E}_x\sum_{n=1}^{T-1}V^\alpha(X_n)
&=&\mathbb{E}_x \sum_{j=1}^{\infty}\sum
_{n=S_{j-1}+1}^{S_{j}} V^\alpha(X_{n})
\mathbb{I}(\Gamma_{S_{0}}=\cdots=\Gamma_{S_{j-1}}=0)
\\
&=& 
\sum_{j=1}^{\infty}
\mathbb{E}_x \Biggl(\sum_{n=S_{j-1}+1}^{S_{j}}
V^\alpha(X_{n}) \Big\vert\Gamma_{S_{0}}=\cdots=
\Gamma_{S_{j-1}}=0 \Biggr) (1-\beta)^{j}
\\
&\leq&\sum_{j=1}^{\infty} \tilde{H}(1-
\beta)^{j} \leq\frac{K-\lambda}{\beta(1-\lambda)}-1,
\end{eqnarray*}
by \eqref{eq: HJ}. For $x\notin J$, we have to add one more term and
note that the above calculation also applies.
\[
\mathbb{E}_x\sum
_{n=1}^{T-1}V^\alpha(X_n) =
 \mathbb{E}_x \sum_{n=1}^{S_{0}}
V^\alpha(X_{n}) 
+\mathbb{E}_x \sum
_{j=1}^{\infty} \sum_{n=S_{j-1}+1}^{S_{j}}
V^\alpha(X_{n}) \mathbb{I}(\Gamma_{S_{0}}=\cdots=
\Gamma_{S_{j-1}}=0).
\]
The extra term is equal to $H(x)-V^\alpha(x)$ and we use \eqref
{eq:BlazejLem} to bound it.
Finally, we obtain
%
\begin{equation}
\label{eq: final} \mathbb{E}_x \sum_{n=1}^{T-1}V^{\alpha}(X_n)
\leq\frac
{V(x)-\lambda-(1-\lambda)V^\alpha(x)}{1-\lambda}\mathbb{I}(x\notin J)+
\frac{K-\lambda}{\beta(1-\lambda)}-1.
\end{equation}
\end{pf*}

\begin{lem}\label{lem:Poly}
If Assumptions \ref{as:SmallSetCond} and
\ref{as:DriftPoly} hold, then
\begin{enumerate}[(iii)]
\item[(i)] for all $\eta\leq\alpha$
\[
\pi\bigl(V^\eta\bigr) \leq\biggl(\frac{K-\lambda}{1-\lambda
}
\biggr)^{{\eta/\alpha}},
\]
\item[(ii)]
\[
\pi(J) \geq\frac{1-\lambda}{K-\lambda},
\]
\item[(iii)] for all $n\geq0$ and $\eta\leq\alpha$
\[
\mathbb{E}_\nu V^\eta(X_n) \leq
\frac{1}{\beta^{{\eta
}/{\alpha}}} \biggl(\frac{K-\lambda}{1-\lambda} \biggr)^{2{\eta
}/{\alpha}}.
\]
\end{enumerate}
\end{lem}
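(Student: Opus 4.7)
The plan is to prove all three parts from a single sharpened form of the drift inequality, extracting (i) and (ii) by integrating against $\pi,$ and then obtaining (iii) by comparing $\nu$ to $\pi$ via the minorisation.

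First I would rewrite Assumption \ref{as:DriftPoly} in the unified form
\begin{equation*}
PV(x) \;\leq\; V(x) - (1-\lambda)V^\alpha(x) + (K-\lambda)\Ind(x \in J).
\end{equation*}
On $J^c$ this is exactly the polynomial drift. On $J,$ since $V \geq 1$ and $\alpha \leq 1$ we have $V^\alpha \leq V,$ whence $V - (1-\lambda)V^\alpha \geq \lambda V \geq \lambda,$ so the right-hand side is at least $K$ and therefore dominates $PV.$ Using $K-\lambda$ rather than the crude $K$ is the key refinement that makes the subsequent constants match up.

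Integrating the above against $\pi$ and cancelling $\pi(V) = \pi(PV)$ gives
\begin{equation*}
(1-\lambda)\pi(V^\alpha) \;\leq\; (K-\lambda)\pi(J).
\end{equation*}
Since $V \geq 1$ forces $\pi(V^\alpha) \geq 1,$ part (ii) follows at once. Using $\pi(J) \leq 1$ instead yields $\pi(V^\alpha) \leq (K-\lambda)/(1-\lambda),$ and Jensen's inequality applied to the concave map $x \mapsto x^{\eta/\alpha}$ (valid for $\eta \leq \alpha$) upgrades this to
\begin{equation*}
\pi(V^\eta) \;=\; \pi\bigl((V^\alpha)^{\eta/\alpha}\bigr) \;\leq\; \pi(V^\alpha)^{\eta/\alpha} \;\leq\; \left(\frac{K-\lambda}{1-\lambda}\right)^{\eta/\alpha},
\end{equation*}
which is (i).

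For (iii), Assumption~\ref{as:SmallSetCond} combined with $\pi P = \pi$ gives $\pi(A) \geq \beta\,\pi(J)\,\nu(A)$ for every Borel $A,$ so $\nu \leq (\beta \pi(J))^{-1}\pi,$ and hence $\nu P^n \leq (\beta \pi(J))^{-1}\pi$ for every $n \geq 0.$ Applying this to the test function $V^\alpha$ and plugging in (i) and (ii) yields
\begin{equation*}
\Ex_\nu V^\alpha(X_n) \;\leq\; \frac{\pi(V^\alpha)}{\beta \pi(J)} \;\leq\; \frac{1}{\beta}\left(\frac{K-\lambda}{1-\lambda}\right)^{2},
\end{equation*}
and a second application of Jensen's inequality to $x \mapsto x^{\eta/\alpha}$ produces the claimed bound. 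I do not anticipate a genuine obstacle here; the only nontrivial step is spotting the $K \mapsto K-\lambda$ refinement in the first display, which is what welds (i)--(iii) together with consistent constants.
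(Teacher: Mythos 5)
Your part (iii) and both Jensen reductions are correct and coincide with what the paper does (the paper phrases the comparison of $\nu$ with $\pi$ as $d\nu/d\pi\le(\beta\pi(J))^{-1}\le (K-\lambda)/(\beta(1-\lambda))$, which is exactly your $\nu\le(\beta\pi(J))^{-1}\pi$ pushed through $P^n$). The one genuine gap is in your derivation of $(1-\lambda)\pi(V^\alpha)\le(K-\lambda)\pi(J)$: after integrating the unified drift inequality against $\pi$ you cancel $\pi(PV)=\pi(V)$ from both sides, but this is only legitimate if $\pi(V)<\infty$. Under Assumption \ref{as:DriftPoly} with $\alpha<1$ the drift only forces $\pi(V^\alpha)<\infty$; $\pi(V)$ may perfectly well be infinite --- indeed that is the typical situation for polynomially ergodic chains --- and then your display reads $\infty\le\infty$ and carries no information. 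The inequality you want is the standard comparison theorem and is true, but it needs either a citation or a truncation: replace $V$ by $V\wedge N$, check (splitting on $V(x)\le N$ versus $V(x)>N$) that $P(V\wedge N)\le (PV)\wedge N\le V\wedge N-(1-\lambda)V^{\alpha}\Ind(V\le N)+(K-\lambda)\Ind(x\in J)$, integrate against $\pi$, cancel the now finite quantity $\pi(V\wedge N)\le N$, and let $N\to\infty$ by monotone convergence.

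For context, the paper reaches (i)--(ii) by a different route that sidesteps this integrability issue entirely: it bounds $\Ex_x\sum_{n=1}^{\tau(J)}V^\alpha(X_n)\le(K-\lambda)/(1-\lambda)$ for $x\in J$ using the bound $H(x)\le(V(x)-\lambda)/(1-\lambda)$ established by a truncated supermartingale argument in the proof of Lemma~\ref{lem: Bax}, and then applies Kac's formula (Lemma~\ref{lem:Kac}) to get $1\le\pi(V^\alpha)=\int_J\Ex_x\sum_{n=1}^{\tau(J)}V^\alpha(X_n)\,\pi(\dx)\le\pi(J)(K-\lambda)/(1-\lambda)$, from which (i) and (ii) both drop out. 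Your unified-drift formulation with the $K\mapsto K-\lambda$ refinement is a clean and more self-contained way to the same constants, but it is only complete once the truncation step above is supplied.
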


\begin{pf}
It is enough to prove (i) and (iii) for $\eta=\alpha$ and apply the
Jensen inequality for $\eta<\alpha$.
We shall need an upper bound on $E_x\sum_{n=1}^{\tau(J)} V^\alpha
(X_n)$ for $x\in J$, where $\tau(J)$ is defined in \eqref
{eq:def_tau}. From the proof of Lemma \ref{lem: Bax},
\[
\mathbb{E}_x\sum_{n=1}^{\tau(J)}
V^\alpha(X_n) = PH(x) \leq
\frac{K-\lambda}{1-\lambda},\qquad  x\in J.
\]
And by Lemma \ref{lem:Kac}, we obtain
\[
1 \leq\pi V^\alpha= \int_{J}
\mathbb{E}_x\sum_{n=1}^{\tau(J)}V^\alpha(X_n)
\pi(\mathrm{d}x) \leq\pi(J)\frac
{K-\lambda}{1-\lambda},
\]
which implies (i) and (ii).

By integrating the small set Assumption \ref{as:SmallSetCond} with
respect to $\pi$ and from (ii) of the current lemma, we obtain
\[
\frac{\mathrm{d}\nu}{\mathrm{d}\pi} \leq\frac{1}{\beta\pi(J)} \leq
\frac{K-\lambda}{\beta(1-\lambda)}.
\]
Consequently,
\begin{eqnarray*}
\mathbb{E}_\nu V^\alpha(X_n) & =& \int
_{\mathcal{X}
}P^nV^\alpha(x)\frac{\mathrm{d}\nu}{\mathrm{d}\pi}
\pi(\mathrm{d}x) \leq\frac
{K-\lambda}{\beta(1-\lambda)}\int_{\mathcal{X}}P^nV^\alpha(x)
\pi(\mathrm{d}x )
\\
& = & \frac{K-\lambda}{\beta(1-\lambda)}\pi\bigl(V^\alpha\bigr),
\end{eqnarray*}
and (iii) results from (i).
\end{pf}

\section{\texorpdfstring{Proofs for Section \protect\ref{Sec:Geom} and \protect\ref{Sec:Poly}}{Proofs for Section 4 and 5}}
\label{Sec:Proofs}

In the proofs for Section \ref{Sec:Geom}, we work under Assumption
\ref{as: GeoDriftCond} and repeatedly use
Corollary~\ref{geoBax}.

\begin{pf*}{Proof of Theorem \ref{th: DriftBounds}}
\begin{longlist}[(iii)]
\item[(i)] Recall that $C_{0}(P)=\mathbb{E}_{\pi} T-\frac{1}{2}$, write
\[
\mathbb{E}_{\pi} T \leq1+ \mathbb{E}_{\pi} \sum
_{n=1}^{T-1}V(X_n)
\]
and use Corollary \ref{geoBax}. The proof of the alternative statement
(i$\prime$) uses first \eqref{eq: Drift} and then is the same.

\item[(ii)] Without loss of generality, assume that $ \Vert{\bar f}\Vert
_{V^{1/2}}= 1$.
By Lemma \ref{lem: SquareBlock}, we then have
\begin{eqnarray*}
  \sigma_{\mathrm{as}}^{2}(P,f) &
=& \mathbb{E}_\nu\bigl(\Xi({\bar f})\bigr)^2/
\mathbb{E}_{\nu} T\leq\mathbb{E}_{\nu} \bigl(\Xi\bigl(V^{1/2}
\bigr)\bigr)^2/\mathbb{E}_{\nu} T
\\
 & =& \mathbb{E}_\pi V(X_0)+2\mathbb{E}_\pi\sum
_{n=1}^{T-1}V^{1/
2}(X_0)V^{1/2}(X_n)
=: \mathrm{I} + \mathrm{II}.
\end{eqnarray*}
To bound the second term, we will use Corollary \ref{geoBax} with
$V^{1/2}$ in place of $V$, which is legitimate because of \eqref
{eq: Drift}.
\begin{eqnarray*}
\mathrm{II}/2 &=&
\mathbb{E}_\pi\sum_{n=1}^{T-1}V^{1/2}(X_0)V^{1/2}(X_n)
= \mathbb{E}_\pi V^{1/2}(X_0)\mathbb{E}\Biggl(\sum
_{n=1}^{T-1}V^{1/2}(X_n)\Big|X_0
\Biggr)
\\
& \leq&\mathbb{E}_{\pi} V^{1/2}(X_0) \biggl(
\frac{\lambda^{1/2}}{1-\lambda^{1/2}}V^{1/2}(X_0)+\frac{K^{1/2}-\lambda^{1/2}-\beta}{\beta
(1-\lambda^{1/2})} \biggr)
\\
&=& \frac{\lambda^{1/2}}{1-\lambda^{1/2}}\pi(V)+\frac
{K^{1/2}-\lambda^{1/2}-\beta}{\beta(1-\lambda^{1/2})}\pi\bigl(V^{1/2}
\bigr).
\end{eqnarray*}
Rearranging terms in $\mathrm{I}+\mathrm{II}$, we obtain
\[
\sigma_{\mathrm{as}}^{2}(P,f) \leq\frac{1
+\lambda^{1/2}}{1-\lambda^{1/2}}
\pi(V)+\frac{2(K^{1/2}-\lambda^{1/2}-\beta)}{\beta(1-\lambda^{1/2})}\pi\bigl
(V^{1/2}\bigr)
\]
and the proof of (ii) is complete.

\item[(iii)]
The proof is similar to that of (ii) but more delicate, because
we now cannot use Lemma~\ref{lem: SquareBlock}. First, write
\begin{eqnarray*}
\mathbb{E}_x \bigl(\Xi
\bigl(V^{1/2}\bigr)\bigr)^2 &=& \mathbb{E}_x
\Biggl(\sum_{n=0}^{T-1}V^{1/2}(X_n)
\Biggr)^2 = \mathbb{E}_x \Biggl(\sum
_{n=0}^\infty V^{1/2}(X_n)\mathbb{I}
(n<T) \Biggr)^2
\\
&=& \mathbb{E}_x\sum_{n=0}^\infty
V(X_n)\mathbb{I}(n<T) +2\mathbb{E}_x \sum
_{n=0}^\infty\sum_{j=n+1}^\infty
V^{1/2}(X_n)V^{1/2}(X_j)\mathbb{I}(j<T )
\\
&=:& \mathrm{I} + \mathrm{II}.
\end{eqnarray*}
The first term can be bounded directly using Corollary \ref{geoBax}
applied to $V$.
\[
\mathrm{I} = \mathbb{E}_x\sum
_{n=0}^\infty V(X_n)\mathbb{I}(n<T) \leq
\frac{1}{1-\lambda} V(x)+\frac{K-\lambda-\beta}{\beta(1-\lambda)}.
\]

To bound the second term, first condition on $X_n$ and apply Corollary
\ref{geoBax} to $V^{1/2}$,
then again apply this corollary to $V$ and to $V^{1/2}$.
\begin{eqnarray*}
\mathrm{II}/2& =&
\mathbb{E}_x \sum_{n=0}^\infty
V^{1/2}(X_n)\mathbb{I}(n<T ) \mathbb{E}\Biggl(\sum
_{j=n+1}^\infty V^{1/2}(X_j)
\mathbb{I}(j<T ) \Big|X_n \Biggr)
\\
& \leq&\mathbb{E}_x \sum_{n=0}^\infty
V^{1/2}(X_n)\mathbb{I}(n<T ) \biggl( \frac{\lambda^{1/2}}{1-\lambda^{1/2}}V^{1/2}(X_n)
+\frac{K^{1/2}-\lambda^{1/2}-\beta}{\beta(1-\lambda
^{1/2})} \biggr)
\\
& =& \frac{\lambda^{1/2}}{1-\lambda^{1/2}}\mathbb{E}_x
\sum
_{n=0}^\infty V(X_n)\mathbb{I}(n<T ) 
+ \frac{K^{1/2}-\lambda^{1/2}-\beta}{\beta(1-\lambda
^{1/2})}\mathbb{E}_x \sum_{n=0}^\infty
V^{1/2}(X_n)\mathbb{I}(n<T )
\\
& \leq&\frac{\lambda^{1/2}}{1-\lambda^{1/2}} \biggl(
\frac{1}{1-\lambda}V(x) +
\frac{K-\lambda-\beta}{\beta(1-\lambda)} \biggr)
\\
&&{} %
+ \frac{K^{1/2}-\lambda^{1/2}-\beta}{\beta(1-\lambda
^{1/2})} \biggl(
\frac{1}{1-\lambda^{1/2}}V^{1/2}(x) +\frac{K^{1/2}-\lambda^{1/2}-\beta}{\beta(1-\lambda^{1/2})} \biggr).
\end{eqnarray*}
Finally, rearranging terms in $\mathrm{I}+\mathrm{II}$, we obtain
\begin{eqnarray*}
\mathbb{E}_x \bigl(\Xi
\bigl(V^{1 /2}\bigr)\bigr)^2 & \leq&\frac{1}{(1-\lambda
^{1/2})^{2}}
V(x) + \frac{2(K^{1/2}-\lambda^{1/2}-\beta)}{\beta
(1-\lambda^{1/2})^{2}} V^{1/2}(x)
\\
&&{} + \frac{\beta(K-\lambda-\beta)+2(K^{1/2}-\lambda^{1/2}-\beta)^{2}} {
\beta^{2}(1-\lambda^{1/2})^{2}},
\end{eqnarray*}
which is tantamount to the desired result.

\item[(iv)] The proof of (iii) applies the same way.
\end{longlist}
\upqed\end{pf*}

\begin{pf*}{Proof of Proposition \ref{pr: Compl}}
For (i) and (ii) Assumption \ref{as: GeoDriftCond} or respectively
drift condition \eqref{eq: Drift} implies that $\pi V=\pi PV\leq
\lambda(\pi V-\pi(J)) +K \pi(J)$
and the result follows immediately.


(iii) and (iv) by induction: $\xi P^{n+1} V= \xi P^n (PV)\leq\xi P^n
(\lambda V+K)\leq\lambda K/(1-\lambda)+K
=K/(1-\lambda)$.


(v) We compute:
\begin{eqnarray*}
\Vert{\bar f}\Vert_{V}& = & \sup_{x \in\mathcal{X}} \frac{|f(x) -
\pi
f|}{V(x)}
\leq\sup_{x \in\mathcal{X}} \frac{|f(x)| +
|\pi f|}{V(x)} \leq
\Vert f \Vert_{V}+ \sup_{x \in\mathcal{X}}
\frac{ \pi( {(|f|/
V)}V)}{V(x)}
\\
& \leq& \sup_{x \in\mathcal{X}} \biggl(\Vert f \Vert_{V} \biggl
[1 +
\frac{\pi V}{V(x)} \biggr] \biggr) \leq\Vert f \Vert_{V}
\biggl[1 + \frac{\pi(J)(K-\lambda
)}{(1-\lambda) \inf_{x \in\mathcal{X}}V(x)} \biggr].
\end{eqnarray*}
\upqed\end{pf*}

In the proofs for Section \ref{Sec:Poly}, we work under Assumption
\ref{as:DriftPoly} and repeatedly use
Lemma \ref{lem: Bax} or Corollary \ref{cor_after_main_lemma}.

\begin{pf*}{Proof of Theorem \ref{th: PolyDriftBounds}}
\begin{longlist}
\item[(i)] Recall that $C_{0}(P)=\mathbb{E}_{\pi} T-\frac{1}{2}$ and write
\[
\mathbb{E}_\pi T \leq1+ \mathbb{E}_\pi\sum
_{i=1}^{T-1}V^{2\alpha
-1}(X_n) =
1+ \int_{\mathcal{X}}\mathbb{E}_x\sum
_{i=1}^{T-1}V^{2\alpha-1}(X_n)\pi(\mathrm{d}x).
\]
From Lemma \ref{lem: Bax} with $V$, $\alpha$ and $\eta=\alpha$, we have
\begin{eqnarray*}
C_{0}(P)&\leq& - {1 \over2} + 1+ \int_{\mathcal{X}}
\biggl(\frac
{V^\alpha(x)-1}{\alpha(1-\lambda)}+ \frac{K^\alpha-1}{\beta\alpha
(1-\lambda)}+\frac{1}{\beta}-1 \biggr) \pi(\mathrm{d}x)
\\
&=&\frac{1}{\alpha(1-\lambda)} \pi\bigl(V^\alpha\bigr) +\frac
{K^\alpha-1 - \beta}{\beta\alpha(1-\lambda)}+{1
\over\beta}-\frac{1}{2}.
\end{eqnarray*}
\item[(ii)] Without loss of generality, we can assume that $ \Vert{\bar
f}\Vert_{V^{({3}/{2})\alpha-1}} = 1$.
By Lemma \ref{lem: SquareBlock}, we have
\begin{eqnarray*}
  \sigma_{\mathrm{as}}^{2}(P,f) &=&
\mathbb{E}_\nu\bigl(\Xi({\bar f})\bigr)^2/
\mathbb{E}_{\nu} T \leq\mathbb{E}_{\nu} \bigl(\Xi
\bigl(V^{({3}/{2})\alpha-1}\bigr)\bigr)^2/\mathbb{E}_{\nu} T
\\
&=& \mathbb{E}_\pi V(X_0)^{3\alpha-2}+2
\mathbb{E}_\pi\sum_{n=1}^{T-1}V^{({3}/{2})\alpha-1}(X_0)V^{({3}/{2})\alpha
-1}(X_n)
=: \mathrm{I} + \mathrm{II}.
\end{eqnarray*}
To bound the second term, we will use Lemma \ref{lem: Bax} with $V$,
$\alpha$ and $\eta={\alpha\over2}$.
\begin{eqnarray*}
\mathrm{II}/2 &=&
\mathbb{E}_\pi\sum_{n=1}^{T-1}V^{({3}/{2})\alpha-1}(X_0)V^{({3}/{2})\alpha-1}(X_n)
= \mathbb{E}_\pi V^{({3}/{2})\alpha-1}(X_0)\mathbb{E}
\Biggl(\sum_{n=1}^{T-1}V^{({3}/{2})\alpha-1}(X_n)\Big|X_0
\Biggr)
\\
&\leq&\mathbb{E}_{\pi} V^{({3}/{2})\alpha-1}(X_0) \biggl(
\frac{V^{{\alpha/2}}(X_0)-1}{{\alpha
}/{2}(1-\lambda)}+ \frac{K^{{\alpha/2}}-1}{\beta{{\alpha}/{2}}(1-\lambda
)}+\frac{1}{\beta}-1 \biggr)
\\
&=& \frac{2}{\alpha(1-\lambda)}\pi\bigl(V^{2\alpha-1}\bigr)+
\biggl(
\frac
{2K^{{\alpha}/{2}}-2 - 2\beta}{\alpha\beta(1-\lambda
)}+\frac{1}{\beta}-1 \biggr)\pi\bigl(V^{({3}/{2})\alpha-1}\bigr).
\end{eqnarray*}
The proof of (ii) is complete.
\item[(iii)] The proof is similar to that of (ii) but more delicate, because
we now cannot use Lemma~\ref{lem: SquareBlock}. Write
\begin{eqnarray*}
\mathbb{E}_x \bigl(\Xi
\bigl(V^{({3}/{2})\alpha-1}\bigr)\bigr)^2 &=& \mathbb{E}_x
\Biggl(\sum_{n=0}^{T-1}V^{({3}/{2})\alpha-1}(X_n)
\Biggr)^2 = \mathbb{E}_x \Biggl(\sum
_{n=0}^\infty V^{({3}/{2})\alpha-1}(X_n)\mathbb{I}
(n<T) \Biggr)^2
\\
&= &\mathbb{E}_x\sum_{n=0}^\infty
V^{3\alpha-2}(X_n)\mathbb{I}(n<T)
\\
&&{} +2\mathbb{E}_x \sum_{n=0}^\infty
\sum_{j=n+1}^\infty V^{({3}/{2})\alpha-1}(X_n)V^{({3}/{2})\alpha-1}(X_j)
\mathbb{I}(j<T )
\\
&=:& \mathrm{I} + \mathrm{II}.
\end{eqnarray*}
The first term can be bounded directly using Corollary \ref
{cor_after_main_lemma} with $\eta=2\alpha-1$
\[
\mathrm{I}= \mathbb{E}_x
\sum_{n=0}^\infty V^{3\alpha
-2}(X_n)
\mathbb{I}(n<T) 
\leq\frac{ V^{2\alpha-1}(x)}{(2\alpha-1)(1-\lambda)}+\frac
{K^{2\alpha
-1}-1 - \beta}{(2\alpha-1)\beta(1-\lambda)}+\frac{1
}{\beta}.
\]

To bound the second term, first condition on $X_n$ and use Corollary
\ref{cor_after_main_lemma} with $\eta=\frac{\alpha}{2}$
then again use Corollary \ref{cor_after_main_lemma} with $\eta=\alpha
$ and $\eta=\frac{\alpha}{2}$.
\begin{eqnarray*}
\mathrm{II}/2 &=&
\mathbb{E}_x \sum_{n=0}^\infty
V^{({3}/{2})\alpha-1}(X_n)\mathbb{I}(n<T ) \mathbb{E}\Biggl(\sum
_{j=n+1}^\infty V^{({3}/{2})\alpha-1}(X_j)
\mathbb{I}(j<T ) \Big|X_n \Biggr)
\\
&\leq&\mathbb{E}_x \sum_{n=0}^\infty
V^{({3}/{2})\alpha-1}(X_n)\mathbb{I}(n<T ) \biggl(\frac{2
V^{{\alpha}/{2}}(X_n)}{\alpha(1-\lambda)} +
\frac{2K^{{\alpha}/{2}}-2 -2\beta}{\alpha\beta(1-\lambda
)}+\frac{1}{\beta}-1 \biggr)
\\
&=& \frac{2}{\alpha(1-\lambda)} \mathbb{E}_x \sum
_{n=0}^\infty V(X_n)^{2\alpha-1}\mathbb{I}(n<T
)
\\
&&{} %
+ \biggl(\frac{2K^{{\alpha}/{2}}-2 - 2 \beta}{\alpha\beta
(1-\lambda)}+\frac{1}{\beta}-1
\biggr) \mathbb{E}_x \sum_{n=0}^\infty
V^{({3}/{2})\alpha-1}(X_n)\mathbb{I}(n<T )
\\
&\leq&\frac{2}{\alpha(1-\lambda)} \biggl(\frac{1}{\alpha
(1-\lambda)} V^{\alpha}(x)+
\frac{K^{\alpha}-1 - \beta}{\alpha
\beta(1-\lambda)} +\frac{1}{\beta} \biggr)
\\
&&{} + \biggl(\frac{2K^{{\alpha}/{2}}-2 - 2\beta}{\alpha
\beta(1-\lambda)}+\frac{1}{\beta}-1 \biggr) \biggl(
\frac{2V^{{\alpha}/{2}}(x)}{\alpha(1-\lambda)} +\frac
{2K^{{\alpha}/{2}}-2 - 2\beta}{\alpha\beta(1-\lambda
)}+\frac{1}{\beta} \biggr) .
\end{eqnarray*}
So after gathering the terms
%
\begin{eqnarray}\label{eq:proofC1Poly}
&&\mathbb{E}_x \bigl(\Xi\bigl(V^{({3}/{2})\alpha-1}\bigr)\bigr
)^2\nonumber\\
 &&\quad
\leq \frac{1}{(2\alpha-1)(1-\lambda)} V^{2\alpha-1}(x)+ \frac
{4}{\alpha^2(1-\lambda)^2}V^{\alpha}(x)
+\frac{\alpha
(1-\lambda)+4}{\alpha\beta(1-\lambda)}
\nonumber\\[-8pt]\\[-8pt]
&& \qquad {} + \biggl(\frac{8K^{\alpha/2}-8 - 8 \beta}{\alpha^2\beta
(1-\lambda)^2}+\frac{4-4\beta}{\alpha\beta(1-\lambda
)} \biggr)V^{\alpha/2}(x)
+ \frac{K^{2\alpha-1}-1 - \beta}{(2\alpha-1)\beta(1-\lambda)}
\nonumber\\
&&\qquad {} + \frac{4(K^{\alpha}-1 -\beta)}{\alpha^2\beta(1-\lambda
)^2} +2 \biggl(\frac{2K^{{\alpha}/{2}}-2 - 2\beta}{\alpha
\beta
(1-\lambda)}+\frac{1}{
\beta} \biggr)^2 -2 \biggl(\frac{2K^{{\alpha}/{2}}-2 - 2 \beta}{\alpha\beta(1-\lambda)}+\frac{1}{\beta}
\biggr).\nonumber
\end{eqnarray}
\item[(iv)] Recall that $C_{2}(P,f)^2=\mathbb{E}_{\xi} (\sum
_{i=n}^{T_{R(n)}-1}|{\bar f}(X_i)|\mathbb{I}(T< n) )^2$ and we have
%
\begin{eqnarray}
\label{eq:proofC2Poly} %
&&
\mathbb{E}_{\xi} \Biggl(\sum_{i=n}^{T_{R(n)}-1}\big|{
\bar f}(X_i)\big|\mathbb{I}(T< n) \Biggr)^2\nonumber
\\
& &\quad %
= \sum_{j=1}^{n}
\mathbb{E}_{\xi} \Biggl( \Biggl(\sum_{i=n}^{T_{R(n)}-1}\big|{
\bar f}(X_i)\big|\mathbb{I}(T< n) \Biggr)^2 \Big|T=j \Biggr)
\mathbb{P}_{\xi}(T=j)
\nonumber\\[-8pt]\\[-8pt]
&&\quad  %
\leq\sum_{j=1}^{n}
\mathbb{E}_{\nu} \Biggl( \sum_{i=n-j}^{T_{R(n-j)}-1}\big|{
\bar f}(X_i)\big| \Biggr)^{2} \mathbb{P}_{\xi}(T=j)\nonumber
\\
& &\quad %
= \sum_{j=1}^{n}
\mathbb{E}_{\nu P^{n-j}} \Biggl( \sum_{i=0}^{T-1}\big|{
\bar f} (X_i)\big| \Biggr)^{2} \mathbb{P}_{\xi}(T=j).\nonumber
\end{eqnarray}
Since
\[
\mathbb{E}_{\nu P^{n-j}} \Biggl( \sum_{i=0}^{T-1}\big|{
\bar f}(X_i)\big| \Biggr)^{2}=\nu P^{n-j} \Biggl(
\mathbb{E}_{x} \Biggl( \sum_{i=0}^{T-1}\big|{
\bar f}(X_i)\big| \Biggr)^{2} \Biggr)
\]
and $|{\bar f}|\leq V^{({3}/{2})\alpha-1}$
we put \eqref{eq:proofC1Poly} into \eqref{eq:proofC2Poly} and apply
Lemma~\ref{lem:Poly} to complete the proof.\mbox{\hfill}\qed
\end{longlist}

\noqed\end{pf*}

\begin{pf*}{Proof of Proposition \ref{pr: PolyCompl}}
For (i) see Lemma~\ref{lem:Poly}. For (ii), we compute:
\begin{eqnarray*}
\Vert{\bar f}\Vert_{V^\eta} & = & \sup_{x \in\mathcal{X}}
\frac
{|f(x) - \pi f|}{V^\eta(x)} \leq\sup_{x \in\mathcal{X}} \frac
{|f(x)| + |\pi f|}{V^\eta(x)}
\leq\Vert f \Vert_{V^\eta} + \sup_{x \in\mathcal{X}} \frac{ \pi
( {(|f|/ V^{\eta})}V^{\eta
})}{V^\eta(x)}
\\
& \leq& \sup_{x \in\mathcal{X}} \biggl(\Vert f \Vert_{V^\eta
} \biggl[1 +
\frac{\pi V^\eta}{V^\eta(x)} \biggr] \biggr) \leq\Vert f \Vert
_{V^\eta}
\bigl( 1+ \pi\bigl(V^\eta\bigr) \bigr).
\end{eqnarray*}
\upqed\end{pf*}

%

\section*{Acknowledgements}
We thank Gersende Fort and Jacek Weso{\l}owski for insightful comments
on earlier versions of this work. We also gratefully acknowledge the
help of Agnieszka Perduta and comments of three referees and an
Associate Editor that helped improve the paper.

Work partially supported by Polish Ministry of Science and Higher Education
Grants No. N~N201387234 and N N201 608740. Krzysztof \L atuszy\'
{n}ski was also partially supported by EPSRC.


%

\printhistory

\end{document}